\newtheorem{theorem}{Theorem}
\newtheorem{lemma}[theorem]{Lemma}
\newtheorem{corollary}[theorem]{Corollary}
\newtheorem{definition}{Definition}
\newcommand{\overbar}[1]{\mkern 1.5mu\overline{\mkern-1.5mu#1\mkern-1.5mu}\mkern 1.5mu}
\title{Parallel inversion of huge covariance matrices}
\author{Anjishnu Banerjee, Joshua Vogelstein, David Dunson}
\begin{document}
\maketitle
\begin{abstract}
An extremely common bottleneck encountered in statistical learning algorithms is inversion of huge covariance matrices, examples being in evaluating Gaussian likelihoods for a large number of data points. We propose general parallel algorithms for inverting positive definite matrices, which are nearly rank deficient.  Such matrix inversions are needed in Gaussian process computations, among other settings, and remain a bottleneck even with the increasing literature on low rank approximations.  We propose a general class of algorithms for parallelizing computations to dramatically speed up computation time by orders of magnitude exploiting multicore architectures.  We implement our algorithm on a cloud computing platform, providing pseudo and actual code. The algorithm can be easily implemented on any multicore parallel computing resource.  Some illustrations are provided to give a flavor for the gains and what becomes possible in freeing up this bottleneck.
\end{abstract}

Keywords: Big data; Gaussian process; MapReduce; Matrix inversion; Parallel computing; QR decomposition.

\section{Introduction}
We consider a symmetric positive definite matrix $K$ of order $n$ where $n$ is very large; on the order of 10,000s to millions or more.  Our interest is in evaluation of $K^{-1}$, which cannot be computed sufficiently quickly using current algorithms.  Even if we could compute the inverse, substantial numeric instability and inaccuracies would be expected due to propagation of errors arising from finite machine precision arithmetic \citep{Trefethen97}.

Typically in statistical applications, one needs to evaluate expressions such as $K^{-1}A$, where $A$ is some matrix of appropriate order. Instead of directly computing $K^{-1}$, one popular approach is to consider the QR decomposition for $K=Q R$ and evaluate the expression, $R^{-1}Q^T A$ \citep{Press07}. The matrix $R$ is lower triangular and therefore $R^{-1}$ can be evaluated by backward substitution, which requires $O(n^2)$ flops instead of the $O(n^3)$ flops for usual inversion. QR decomposition is known to be relatively more stable and efficient than other standard algorithms \citep{Cox97QR}, a close competitor being the Cholesky decomposition. 
 
The problem is that QR decomposition for $K$ requires $O(n^3)$ computations, which is of the same order as that of inversion and therefore prohibitively expensive for large $n$. Even with QR decomposition being relatively stable, for very large $n$, finite precision numerical errors are large. This is accentuated when the matrix $K$ has a small spectrum, in the sense of fast decaying eigenvalues, as is typically obtained for covariance matrices obtained from smooth covariance kernels and in large least square problems among a host of other areas.  However, this small spectrum, while being the bane of full rank algorithms, makes low-rank algorithms highly accurate.

We propose a new class of algorithms combining ideas from recent developments in linear algebra to find approximate low-rank decompositions. These low-rank decompositions  provide several orders of magnitude improvement in speed while also improving accuracy. We propose a general blocking method to enable implementation in parallel on distributed computing architectures.  We also consider accuracy of these approximations and provide bounds which are obtained with high probability. Our approach amalgamates ideas from three recent but apparently unrelated developments in numerical linear algebra and machine learning, which we briefly outline below.

There has been an increasing literature on approximating a matrix $B$ of order $n \times n$ by $B \Omega$ where $\Omega$ is a matrix with random entries of order $n \times d$ with $d << n$ \citep{HM11}. Originally motivated by Johnson Lindenstrauss transforms \citep{Johnson84}, these results have been used in a host of application areas, including compressive sensing \citep{D06,CT06} and approximate Gaussian process regression \citep{Banerjee12}. Typically $\Omega$ is populated by random entries, such as draws from Gaussian or Rademacher distributions. The product $B \Omega$ involves $O(n^2d)$ flops and can be itself expensive. Recent developments have shown that structured random matrices, such as random Hadamard transforms, may improve the efficiency significantly, bringing down the number of flops required to the order of $n\log{d}$ \citep{Woolfe08}.  In our case, we show that with certain classes of structured random matrices, $\Omega$ of order $n \times d$, we have that $K\Omega$ has approximately the same column space as that of $K$ (in a sense to be made precise later), and that the product $K\Omega$ may be computed efficiently. In addition, such structured random matrices spread the information of the matrix $K$, so that we obtain bounds on the minimum and maximum eigenvalues of $K\Omega$, implying that approximate decompositions using $K\Omega$ are almost entirely devoid of usual inaccuracies arising from numerical conditioning.

Having efficiently obtained a tall skinny matrix $K\Omega$ from the square positive definite matrix $K$, we now consider recent developments which show that QR decompositions of such tall skinny matrices may be done efficiently in parallel \citep{Constantine11, Agullo10}. The key idea is to partition the tall matrix into blocks and efficiently combine $QR$ decomposition of each of these blocks. The main considerations are the choice of the number of blocks and column size $d$ of the projection matrix $\Omega$, which regulate computation time and accuracy. We provide some theoretical pointers while empirically demonstrating gains achieved on modest architectures. Approximate $QR$ decompositions are useful in obtaining other approximate matrix decompositions, such as approximate spectral decompositions or matrix products.  Several recent articles have focused on these issues and \citet{HM11} provides an excellent review for algorithms which switch between approximate decompositions. For illustration, we focus on the scenario when the principal interest is in evaluating a Gaussian type likelihood and fine tune the algorithms in this context. We finally consider performance versus other competitors.

\section{Notational Preliminaries}

We begin with some notations which we shall use through the rest of the article. In general we will represent matrices in the upper case, $A,B$ etc and row or column vectors in the lower case, $a,b$ etc. We shall use Frobenius and spectral matrix norms: $\|A\|_F=\sqrt{\sum_{i,j}a(i,j)^2}$ and $\|A\|_2=max\{\|Ax\|:\|x\|=1\}$ respectively, with the notation $\|A\|_\psi$ to imply the conclusion holds for both norms.  We begin with a  real-valued positive definite matrix $K \in \mathbb{R}^{n \times n}$ and $k(i,j)$ will denote the $(i,j)$th element of $K$. A spectral decomposition of $K=UDU^T$ can be partitioned as,
\[ K = \left[ U_m \, U_{(n-m)} \right] \left[ \begin{array} {cc}
                    D_{m} \,  & 0 \\
                    0 \,  & D_{(n-m)}
                   \end{array} \right] \left[ U_m \, U_{(n-m)} \right] ^{T}.
\]
Since $K$ is positive definite, the eigenvalues are positive and we assume without loss of generality that the diagonal matrix $D$ of eigenvalues contains them in descending order of magnitude and the eigenvector matrix $U$ is orthonormal. $D_{m}$ is therefore the matrix containing the $m$ largest eigenvalues and $U_m$ the corresponding eigenvector matrix. By the Eckart Young theorem \citep{S93}, the best rank $m$ approximation $K_m$ to $K$ in terms of minimizing $\|K-K_m\|_\psi$ is given by $K_m=U_m D_m U_m^T$.

 We shall focus on the case where the matrix $K$ has fast decaying eigenvalues. Some examples of problems where this happens is in large least squares optimization problems, $\|Ax-b\|$, with $K= A^T A$, or with $K$ being a covariance matrix generated for a Gaussian process at a dense set of locations with a smooth covariance function \citep{Banerjee12, Drineas11}. For some of our results, we will assume that eigenvalues of $K$ decay at an exponential rate, by which we assume that there exists positive constants $\lambda_1, \lambda_2$ such that $d_{i,i}\leq\lambda_1 e^{\lambda_2 i} \forall i$. $\lambda_1, \lambda_2$ shall be referred to as the proportionality and rate constants of exponential decay respectively.  

 An important consideration in this article will be the numerical stability of the algorithms and decompositions. One way in which the stability of a matrix decomposition can be measured is by how well the matrix in question is conditioned. The condition number of a positive semidefinite matrix $A$ is given by $c(A)=\sigma_l/\sigma_s$, where $\sigma_l,\sigma_s$ are the largest and smallest eigenvalues of the matrix respectively. For the positive definite matrix $K$ it would be $c(K)=d(1,1)/d(n,n)$.

\section{The Main Algorithm}

Our fast approximate inversion algorithm uses three different ideas from numerical techniques and randomized linear algebra. The first is to approximate the spectrum of the large matrix $K$ by post-multiplying it with a random matrix $\Omega$, the second is to consider incorporating special structure in the random matrix $\Omega$ for faster evaluation of the product and lastly, blocking strategies, to enable computation of decomposition of the product $K\Omega$ on multicore architectures. 

\subsection{Column space approximation}

 As mentioned, we shall be considering positive definite matrices $K$ with very fast decaying spectrums. In section \S$4$, we provide theoretical justification to show that eigenvalues of positive definite matrices, produced as discrete realizations of a wide class of positive definite kernels, decay very fast. For a positive definite matrix  $K$ with fast decaying spectrums, it is reasonable to expect that if $\Omega$ is a $n \times r$ matrix with $r \ll n$, populated with independent entries, then the product $K \Omega$ will capture most of the information in $K$, or approximate the column space of $K$. Such approximation schemes have been in focus in the rapidly expanding field of \emph{randomized linear algebra} (refer to \citet{HM11} for a review). \citet{Banerjee12} consider this method in the context of approximations for Gaussian processes for large data sets. In general, it is difficult to measure the amount of information captured in $K\Omega$ from $K$. One way to do this is to consider the generalized projection of $K$ onto $K\Omega$ and then consider differences.  Letting $Q$ be an orthonormal basis for the column space of $K\Omega$,  we may consider the error, $\|K-QQ^TK\|_{\psi}$, where $Q^TQ=I$, by virtue of orthonormality. In general, we may consider an approximation scheme with two objectives in mind.
\begin{enumerate}
 \item Fix a target rank $m$ and try to minimize the matrix norm error when using $\Omega$ having the target rank.
 \item Fix a target matrix norm error $\epsilon$ and try to achieve that error with high probability with the smallest possible $\Omega$.
\end{enumerate}
We shall consider each of objectives in our projection approximation designs.
 
The simplest possible choice for the random matrix $\Omega$ is a submatrix of columns from the $n \times n$ identity matrix (this submatrix is sometimes called a permutation submatrix, denoted by $P$). This amounts to choosing $r$ columns at random from the matrix $K$. Low rank approximations using a submatrix have been explored in a variety of contexts, including subset of regressors \citep{CR05} and least squares \citep{Drineas11}. With the advent of compressive sensing \citep{D06,CT06}, it has been shown that instead of restricting $\Omega$ to permutation submatrices, more general random matrices often have better performance \citep{Drineas11}. Common choices of the random matrix $\Omega$, which have been used in the compressive sensing and matrix approximation literature are:
\begin{enumerate}
 \item elements of $\Omega$ are independent and identically distributed Gaussian random variables appropriately scaled,
 \item elements of $\Omega$ are independent and identically distributed Rademacher variables,
 \item elements of $\Omega$ are independently sampled from the uniform distribution on the unit sphere.
\end{enumerate}
In general it has been shown that most choices, based on sampling from a centered distribution and then appropriately scaling, work in achieving accurate error bounds with high probability \citep{CT06}. 

\subsection{Approximations via structured random matrices}
The biggest drawback of the approximation techniques in \S$3.4.1$ is that the matrix product $K\Omega$ may itself be prohibitively expensive $O(n^2r)$  for very large $K$. Instead we may form the random matrix $\Omega$ such that the product $K\Omega$ can be evaluated faster, exploiting properties of the structured distribution.

\begin{definition}
\label{srmatrix}
We shall call a random matrix a structured random matrix if it is of the form $\Omega= cRTP$, where
\begin{itemize}
 \item $c$ is an appropriate scaling constant, such that the columns of $\Omega$ are orthonormal,
 \item $R$ is an $n \times n$ diagonal matrix whose diagonal elements are independent Rademacher entries,
 \item $T$ is an $n \times n$ appropriate orthogonal transform, facilitating the fast multiply,
 \item $P$ is an $n \times r$ permutation submatrix. 
\end{itemize}
\end{definition}

There are different choices of orthogonal transforms we may use, all of which facilitate the fast multiply, examples for $T$ being the discrete Fourier matrix, discrete cosine matrix, discrete Hartley matrix or the Walsh-Hadamard matrix. The Walsh-Hadamard matrix in particular has been in focus in recent literature \citep{Tropp11} and tight bounds on the approximation accuracy have been obtained. The Walsh Hadamard matrix of order $n \times n$ is defined as,
\[
 H_n = \left[
\begin{array}{cc}
H_{\frac{n}{2}} \, & H_{\frac{n}{2}}\\ 
H_{\frac{n}{2}} \, & -H_{\frac{n}{2}}
\end{array}
\right]
\quad \mbox{with}\quad
 H_2 = \left[
\begin{array}{cc}
1 \, & 1\\ 
1 \, & -1
\end{array}
\right].
\]

The Walsh-Hadamard and discrete Fourier transforms have some optimality with respect to preserving matrix coherence \citep{Tropp11,Gittens12}, but a disadvantage is Walsh-Hadamard transforms exist only for positive integers which are powers of $2$, while the discrete Fourier transforms may involve complex numbers. The alternatives, Hartley and discrete cosine transforms while being marginally suboptimal do not have these problems \citep{Avron10}. In the following section, we formalize the theoretical setting  and describe probabilistic error bounds for column space approximations using transforms with 
general orthonormal matrices (therefore valid for any of the aforementioned transforms). 

\subsection{Blocked decompositions for parallelization}
Our focus has been on obtaining a low rank rectangular matrix $\hat{K}$ of order $n \times r$, with $r \ll n$,  as an approximation to $K$, such that the column space of $\hat{K}$ is very close to the column space of $K$. To measure the approximation error, we used the error in column space approximation via projection, measured as, $||K-QQ^TK\|_{\psi}$, where $Q$ is an orthogonal basis for the column space of $\hat{K}$. Such a $Q$ is typically obtained from the $QR$ decomposition of $\hat{K}$, where $Q$ is an $n \times r$ matrix with orthonormal columns and $R$ is an upper triangular matrix (lower triangular in some conventions). Obtaining a $QR$ decomposition of an $n \times r$ matrix has computational cost $O(nr^2)$. With the matrix multiplication for forming $K\Omega$, the eventual computational cost is $O(n^2r)$ (for $n \gg r$), which may be too expensive for very large $n$. 

To further reduce computational burden, we may consider parallelizing the computations. Matrix multiplication is trivially parallelizable, while most matrix decompositions are not \citep{Choi94}. State-of-the-art linear algebra algorithms, for example in ScaLAPACK routines \citep{Blackford97}, $QR$, Cholesky or $SVD$ decompositions, involve dense manipulations of the whole matrix, which are not parallelizable or  cannot be blocked. In addition to computational cost, storing the full matrix $\hat{K}$ may be problematic in terms of memory requirements for very large $n$. However, in our case, we are interested in the $QR$ decomposition of a tall matrix, number of rows being much greater than the number of columns, for which blocking strategies have been recently developed, exploiting modern parallel computing  platforms like MapReduce. We elaborate on the blocking strategies next. 

For notational clarity assume that $n=8r$ and consider a partition of $\hat{K}$,
\begin{equation*}
\hat{K} = \left[
\begin{array}{c}
K_1 \\ \hline
K_2 \\ \hline
K_3 \\ \hline
K_4
\end{array}
\right]
\end{equation*}
In the above each $K_i$ is of size $2r\times r$. Denote the QR factorization of the small matrix $K_i=Q_i R_i$. Along the lines of the $TSQR$ factorization presented in \citet{Constantine11} we consider the following blocked factorization,
\begin{equation}
\label{block_qr}
\left[
\begin{array}{c}
K_1 \\ 
K_2 \\ 
K_3 \\
K_4 
\end{array}
\right]
=\left[
\begin{array}{cccc}
Q_1 & & & \\ 
 & Q_2 & &\\ 
 & & Q_3 &\\ 
&  &  & Q_4\\ 
\end{array}
\right]
\quad 
 \left[
\begin{array}{c}
R_1 \\ 
R_2 \\ 
R_3 \\
R_4 
\end{array}
\right] 
= \quad Q^b_1 R^b_1
\end{equation}
Each of the small matrices $Q_i$ being orthogonal, the left factor of the decomposition in \eqref{block_qr}, $Q^b_1$, is formed by stacking the small matrices $Q_i$ as diagonal blocks, in an orthogonal matrix. It is also an orthogonal basis for the column space of $\hat{K}$ and in many application areas it suffices to work with $Q^b_1$. In some other applications, we may require evaluation of the product $Q^b_1 x$, where $x$ is some vector of order $r \times 1$, which can be evaluated in a similar blocked fashion, utilizing the small $Q_i$'s. The matrix $R^b_1$, formed by stacking the small upper triangular matrices $R_i$, is not upper triangular and hence the decomposition $Q^b_1 R^b_1$ is not a $QR$ decomposition of $\hat{K}$. To achieve the $QR$ decomposition of $\hat{K}$ consider the following sequence of blocked decompositions for $R^b_1$,
\begin{equation}
\label{block_qr_it1}
\left[
\begin{array}{c}
R_1 \\ 
R_2 \\ \hline 
R_3 \\
R_4 
\end{array}
\right]
=\left[
\begin{array}{cc}
Q_5 &  \\ \hline
 & Q_6 \\ 
\end{array}
\right]
\quad 
 \left[
\begin{array}{c}
R_5 \\ \hline
R_6 
\end{array}
\right] 
= \quad Q^b_2 R^b_2,
\end{equation}
and finally the $QR$ decomposition of the $2r \times r$ matrix $R^b_2$,
\begin{equation}
 \label{block_qr_it2}
\left[
  \begin{array}{c}
R_5 \\ \hline
R_6 
\end{array}
\right] 
= \quad Q^b_3 R^b_3,
\end{equation}
where we can give the final $QR$ decomposition of $\hat{K}$ as, 
\begin{equation}
 \label{block_qr_final}
\hat{K} = (Q^b_1 Q^b_2 Q^b_3) R^b_3 = \hat{Q} R^b_3.
\end{equation}
$R^b_3$ by virtue of the construction is an upper triangular matrix, while $\hat{Q}$, a product of orthogonal matrices is itself orthogonal. A similar iteration can be carried out for any $n,r$ by appropriately adjusting the block sizes. In some instances, if only $(R^b_3)^{-1}$ is required, the matrix product for forming $\hat{Q}$ need not be evaluated at all.  On the other hand, if a product of the form $\hat{Q}^T x$ is required for some some vector $x$, the entire product maybe evaluated via blocking. We give more details about such strategies in our examples.

There are other ways of constructing a sequence of small blocked $QR$ decompositions leading to a $QR$ decomposition of the tall matrix $\hat{K}$. The scheme presented in \eqref{block_qr_final} is the one allowing for maximum assimilation of computations at the same time.  Depending on the architecture being used, this may not be the most optimal choice. Below we present another analogous blocked QR scheme, which allows for lesser number of computations to be carried on simultaneously, but possibly uses lesser communication between units of a multicore architecture. Assume, again for notational clarity that $n=8r$. The matrix $\hat{K}$ is now partitioned into $8$ blocks, $\{K_i\}_{i=1}^8$ of size $r \times r $ each. We operate on $\{K_i\}_{i=1}^4$ and $\{K_i\}_{i=5}^8$ simultaneously, via a sequence of small $QR$ decompositions in the following manner, 

\begin{equation}
\label{block_qr_alt}
\left[
\begin{array}{c}
K_1 \\ 
K_2 \\ 
K_3 \\
K_4 \\ \hline
K_5 \\
K_6 \\
K_7 \\
K_8
\end{array}
\right]
=
\begin{array}{c}
\left[
\begin{array}{cccc}
Q_1 & & & \\ 
 & I & &\\ 
 & & I &\\ 
&  &  & I 	
\end{array}
\right]
\quad 
 \left[
\begin{array}{ccc}
Q_2 & & \\ 
 & I  &\\ 
 & & I  
\end{array}
\right]
\left[
\begin{array}{cc}
Q_3 & \\ 
 & I 
\end{array}
\right]
\quad Q_4 R_4
\\ \hline
\left[
\begin{array}{cccc}
Q_5 & & & \\ 
 & I & &\\ 
 & & I &\\ 
&  &  & I 	
\end{array}
\right]
\quad 
 \left[
\begin{array}{ccc}
Q_6 & & \\ 
 & I  &\\ 
 & & I  
\end{array}
\right]
\left[
\begin{array}{cc}
Q_7 & \\ 
 & I 
\end{array}
\right]
\quad Q_8 R_8,
\end{array}
\end{equation}
where, $Q_1 R_1$ is the orthogonal matrix from the $QR$ decomposition of $K_1$, $Q_2$ is from the $QR$ decomposition of $\left[ \begin{array}{c} R_1 \\ K_2                                                                                                       
                                                                                                                     \end{array} \right]$, etc and $I$ denotes identity matrix of appropriate size. The matrices $R_1, R_2 \ldots$ are intentionally omitted from \eqref{block_qr_alt} for clarity. Analogously $Q_5,Q_6,\ldots$ are obtained from the partition $\{K_i\}_{i=5}^{8}$. To complete the decomposition, we make another $QR$ decomposition of $\left[ \begin{array}{c} R_4\\ R_8 \end{array} \right]$. This alternative blocking scheme involves minimum communication between units of the multicore architecture. The optimal schemes will usually be a mixture of the strategies \eqref{block_qr_it1} and \eqref{block_qr_alt}. For more variants of blocking algorithms to get $QR$ decompositions for tall matrices, we refer the readers to \citet{Constantine11,Agullo10}.

We now consider the theoretical gains possible from the blocked algorithm, while considering block size. Consider the ideal scenario, when there is no overhead or communication cost and each unit of the multicore architecture has equal speed. We consider the strategy corresponding to \eqref{block_qr_it1}. Each of the small $QR$ decompositions involve an $2r \times r$ matrix, which has $O(r^3)$ cost. At the first iteration, as in \eqref{block_qr_it1}, using the same number of units as the number of blocks, say $b$, we can perform all the computations in $O(r^3)$. In fact, it is easy to see that the computational order of $O(r^3)$ holds true for each of the subsequent iterations. The total number of iterations needed, following the first strategy would be $\lfloor \log_2 (b) \rfloor +1$, which brings the total computational cost to $O(\log_2 (b) r^3)$ for the $QR$ of $\hat{K}$. Forming the product $K\Omega$ and additional matrix multiplications (which will almost always be needed in applications after the $QR$ decomposition), bring the cost to $O(n^2r)$ without parallelization, whereas with the blocking algorithm, provided we perform the matrix multiplications with the same blocked structure, we achieve the whole computation in $O(\log_2(b) r^3)$. Following the first strategy and from the above discussion, in the ideal scenario, the number of blocks to be used is $b=\lfloor \frac{n}{2r} \rfloor +1$. The theoretical possible maximum speed-up, assuming no communication cost, is $O(\frac{b^2}{\log_2(b)})$, where we measure the speed-up from Amdahl's equation \citep{Mattson05}, as proportion of speed of the unparallelized computations to the speed of the parallelized computations. 

\section{Theory: Motivating results and approximation error bounds}

\subsection{Fast decay of spectrum of large positive definite matrices}
In this subsection we derive some motivating results justifying our assertion in \S$3.1$ that positive definite matrices obtained as dense discrete realizations of positive definite kernels will have a very fast decaying spectrum, enabling us to obtain good approximations using much lower rank matrices. The ideas are not entirely new, we adapt abstract results from theory of integral equations and stochastic series expansions to our context. To prepare the background, we start with the well-known Mercer's theorem for positive definite functions.

\begin{definition}
Let $D$ be a compact metric space and $C$ a function, $C:D\times D\to\mathbb{R}^+\cup\{0\}$. $C$ is said to be positive definite if for all $n$, scalars $c_1,\ldots,c_n \in \mathbb{R}$ and $x_1,\ldots,x_n \in D$, we have $\sum_i\sum_j c_i\,c_j\,C(x_i,x_j)>0$. It follows trivially from the definition that if $K$ is an $n \times n$ matrix with $k(i,j)=c(x_i,x_j)$ for $x_1,\ldots,x_n \in D$, then $K$ is a positive definite matrix.  
\end{definition}

With this definition, we give the following version of Mercer's theorem for the positive definite function $C(\cdot,\cdot)$ \citep{Kuhn87},
\begin{theorem} [Mercer's theorem]
 For every positive definite function $C(\cdot,\cdot)$ defined from $D \times D \to \mathbb{R}^+\cup\{0\}$, where $D$ is a compact metric space, there exists a real valued scalar sequence, $\{\lambda_i\}\in l_1$, $\lambda_1 \geq \lambda_2 \geq \ldots \geq 0$ and an orthonormal sequence of functions, $\{\phi_i(\cdot)\} \in L^2(D)$, $\phi_i(\cdot):D\to \mathbb{R}$, such that,
\begin{align*}
 C(x_1,x_2)=\sum_{i\in \mathbb{N}} \lambda_i \phi_i(x_1) \phi_i(x_2),
\end{align*}
$\forall x_1, x_2 \in D$ and this sequence converges in the mean square sense and uniformly in $D$. 
\end{theorem}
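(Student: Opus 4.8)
\section*{Proof proposal for Mercer's theorem}

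The plan is to realize $C$ as the integral kernel of an operator on $L^2(D)$, apply the spectral theorem for compact self-adjoint operators, and then upgrade the resulting $L^2$-convergent eigenexpansion to a uniformly convergent one by exploiting positivity of the kernel.

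First I would fix a finite Borel measure $\mu$ on $D$ with full support (for instance Lebesgue measure if $D\subset\mathbb{R}^d$) and define $T_C:L^2(D,\mu)\to L^2(D,\mu)$ by $(T_Cf)(x)=\int_D C(x,y)f(y)\,d\mu(y)$. Since $D$ is compact and $C$ continuous, $C$ is bounded and uniformly continuous on $D\times D$, hence lies in $L^2(D\times D,\mu\otimes\mu)$, so $T_C$ is Hilbert--Schmidt and in particular compact; symmetry of $C$ makes $T_C$ self-adjoint; and the positive-definiteness hypothesis, applied to the finite sums approximating the double integral $\langle T_Cf,f\rangle$ and then passing to the limit, gives $\langle T_Cf,f\rangle\geq 0$, so $T_C$ is positive. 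The spectral theorem then produces an orthonormal system $\{\phi_i\}\subset L^2(D,\mu)$ of eigenfunctions with eigenvalues $\lambda_1\geq\lambda_2\geq\cdots\geq 0$, $\lambda_i\to 0$, with $T_Cf=\sum_i\lambda_i\langle f,\phi_i\rangle\phi_i$ and, equivalently, $C=\sum_i\lambda_i\,\phi_i\otimes\phi_i$ in $L^2(D\times D)$. For $\lambda_i>0$ we have $\phi_i=\lambda_i^{-1}T_C\phi_i$, and since $T_C$ maps $L^2(D,\mu)$ into $C(D)$ (again by uniform continuity of $C$), each such $\phi_i$ has a continuous version, which we fix from now on.

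Next I would prove the pointwise bound $\sum_{i=1}^N\lambda_i\phi_i(x)^2\leq C(x,x)$ for all $N$ and $x\in D$. The key is that the remainder $C_N:=C-\sum_{i=1}^N\lambda_i\,\phi_i\otimes\phi_i$ is the continuous kernel of $T_C$ composed with the orthogonal projection onto $\mathrm{span}\{\phi_1,\dots,\phi_N\}^{\perp}$, which is again a positive operator; and a positive integral operator with continuous kernel $g$ must have $g(x,x)\geq 0$ everywhere, since otherwise $g$ is negative on a neighborhood $B\times B$ of some diagonal point and testing against the indicator of $B$ (which has positive mass because $\mu$ has full support) contradicts positivity. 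Hence $C_N(x,x)\geq 0$, i.e.\ the claimed bound, with right-hand side at most $M:=\max_{x\in D}C(x,x)<\infty$. By Cauchy--Schwarz the partial sums then satisfy $\bigl|\sum_{i=m}^n\lambda_i\phi_i(x)\phi_i(y)\bigr|\leq M^{1/2}\bigl(\sum_{i=m}^n\lambda_i\phi_i(x)^2\bigr)^{1/2}$. The continuous functions $\psi_N(x):=\sum_{i=1}^N\lambda_i\phi_i(x)^2$ increase in $N$ and are bounded by $M$, hence converge pointwise to a bounded limit $\psi$; the displayed estimate shows $\sum_i\lambda_i\,\phi_i\otimes\phi_i$ converges uniformly on $D\times D$ to a continuous function, which must coincide with the $L^2$-limit $C$, so in particular $\psi(x)=C(x,x)$ is continuous. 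Dini's theorem applied to $\psi_N\nearrow\psi$ gives uniform convergence of $\psi_N$, and feeding this back into the Cauchy estimate yields uniform convergence of $\sum_i\lambda_i\phi_i(x)\phi_i(y)$ to $C(x,y)$; mean-square convergence is then immediate since $\mu\otimes\mu$ is finite. Finally, monotone convergence gives $\sum_i\lambda_i=\lim_N\int_D\psi_N\,d\mu=\int_D C(x,x)\,d\mu(x)\leq M\mu(D)<\infty$, so $\{\lambda_i\}\in l_1$.

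The main obstacle is the diagonal-positivity step, passing from ``$T_{C_N}$ is a positive operator'' to ``$C_N(x,x)\geq 0$ pointwise'': this is exactly where continuity of $C$ (and full support of $\mu$) is used, it supplies the uniform bound on $\psi_N$, and it is what makes the upgrade from $L^2$- to uniform convergence possible. Everything else is routine bookkeeping with the spectral theorem together with Cauchy--Schwarz, Dini's theorem, and monotone convergence.
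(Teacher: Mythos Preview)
The paper does not actually prove this statement: immediately after stating Mercer's theorem it writes ``We omit the proof of this theorem and refer the readers to \citep{Kuhn87}.'' So there is no paper-proof to compare against; the theorem is quoted as background. Your write-up is the standard textbook argument (integral operator $\to$ Hilbert--Schmidt $\to$ spectral theorem $\to$ Dini upgrade), which is exactly the proof one finds in the references the paper points to.

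One logical slip in your second paragraph is worth cleaning up. From the Cauchy--Schwarz bound
\[
\Bigl|\sum_{i=m}^n \lambda_i \phi_i(x)\phi_i(y)\Bigr|
\le M^{1/2}\Bigl(\sum_{i=m}^n \lambda_i \phi_i(x)^2\Bigr)^{1/2}
\]
you conclude that the eigen-series converges \emph{uniformly on $D\times D$}, and only afterwards invoke Dini. That is circular: at this point the right-hand side is known to be small only pointwise in $x$, so the estimate gives, for each fixed $x$, uniform convergence in $y$ --- not joint uniform convergence in $(x,y)$. The correct order is: (i) for fixed $x$, the series converges uniformly in $y$ to a continuous function of $y$ that agrees with $C(x,\cdot)$ $\mu$-a.e.\ (from the $L^2$ identification) and hence everywhere by continuity; (ii) taking $y=x$ gives $\psi(x)=C(x,x)$, which is continuous; (iii) now Dini applies to $\psi_N\nearrow C(\cdot,\cdot)$ and yields uniform convergence of $\psi_N$; (iv) feeding this back into the Cauchy--Schwarz bound finally gives uniform convergence on $D\times D$. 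All the ingredients are already in your proposal, they just need to be resequenced. With that fix the argument is complete.
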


We omit the the proof of this theorem and refer the readers to \citep{Kuhn87}.Mercer's theorem is a generalization of the spectral decomposition for matrices extended to functions. $\{\lambda_i\}$, $\{\phi(\cdot)\}$  are referred to as the eigenvalues and eigenfunctions respectively of $C(\cdot,\cdot)$. Alternatively, suppose $D \subset \mathbb{R}^d$ and considering Lebesgue measure, we have the following integral equation for the eigenvalues and eigenfunctions, analogous to the definition of matrix eigenvalues and eigenfunctions,
\begin{align}
 \label{integral_eqn}
 \lambda_i \phi_i(y) = \int_D C(x, y)\phi_i(x)dx,
\end{align}
$\forall x \in D , i \in \mathbb{N}$.

The eigenvalues and eigenfunctions of $C(\cdot,\cdot)$ can be approximated using the eigenvalues and eigenvectors of the positive definite matrix $K$, obtained by evaluating $C(\cdot,\cdot)$ at any set of $n$ locations $x_1,\ldots x_n \in D$.  To see this, consider a discrete approximation of the integral equation (\ref{integral_eqn}), using the points $x_1, \ldots, x_n$, with equal weights in the quadrature rule,
\begin{align}
 \label{integral_approx}
\lambda_i \phi_i(y) \approx \frac{1}{n}\sum_{j=1}^{n} C(x_j, y)\phi_i(x_j)
\end{align}
Substituting $y=x_1,\ldots,x_n$ in (\ref{integral_approx}) we get a system of linear equations which is equivalent to the matrix eigenproblem, $K U^i=d(i,i) U^i, i =1,\ldots,n$, where $U^i$ is the $i^{\text{th}}$ row of the eigenvector matrix $U$. This approximation is related to the Galerkin method for integral equations and the Nystrom approximation \citep{Baker77, Delves88}. It also corresponds to finite truncation of the expansion from the Mercer's theorem. In general it can be shown that $d(i,i)/n$ and $\sqrt{n}\,u(i,j)$ converge to $\lambda_i$ and $\phi_i(x_j)$ respectively as $n \to \infty$ \citep{Baker77}.

The accuracy of the finite truncation of the Mercer expansion has been in recent focus and error bounds have been obtained depending on the degree of smoothness of the positive definite function \citep{Todor06, Schwab06}, in the context of stochastic uncertainty quantification. To borrow from these ideas and adapt the abstract results in our case, we begin with definitions quantifying the smoothness of the covariance functions. 

\begin{definition}
Let $C(\cdot,\cdot)$ be a positive definite function on $D \times D$, where $D$ is a compact metric space. We call $C(\cdot,\cdot)$ piecewise Sobolev $(p,q)$ smooth, for some $p,q \in \mathbb{N}$, if there exists a finite disjoint partition $\{D_j, j \in J\} $ of $D$, with $\overbar{D} \subset \cup_{j \in J} \overbar{D_j}$, such that for any pair $(j_1,j_2) \in J$, $C(\cdot,\cdot)$ is Sobolev $(p,q)$ on $D_{j_2} \times D_{j_2}$.
\end{definition}

\begin{definition}
Let $C(\cdot,\cdot)$ be a positive definite function on $D \times D$, where $D$ is a compact metric space. We call $C(\cdot,\cdot)$ piecewise analytic smooth, for some $p,q \in \mathbb{N}$, if there exists a finite disjoint partition $\{D_j, j \in J\} $ of $D$, with $\overbar{D} \subset \cup_{j \in J} \overbar{D_j}$, such that for any pair $(j_1,j_2) \in J$, $C(\cdot,\cdot)$ is analytic on $D_{j_2} \times D_{j_2}$.
\end{definition}

Commonly used covariance functions such as the squared exponential, $C(x,y)=c_1 exp(-c_2\|x-y\|^2)$ and the Matern function, depending on the value of the parameter $\nu$, fall into one of the above categories. In fact the squared exponential function is Sobolev smooth on any compact domain and we can give stronger results for decay of its eigenvalues. 

Using the couple of definitions categorizing smoothness, we give the following result, which is analogous to a functional version of the Eckart-Young theorem.
\begin{lemma}
\label{mercer_truncation}
Let $D \subset \mathbb{R}^d$. For any $m \in \mathbb{N}$, and let $\mathcal{F}_m$ be an $m$ dimensional closed subspace of positive definite functions on $D \times D$, where by $m$ dimensional we mean that $\exists$ sequence of orthonormal function $\{\psi_j(\cdot)\}_{j=1}^m$ on $D$ such that$\{\psi_j^2(\cdot)\}_{j=1}^m$ spans $\mathcal{F}_m$, with positive coefficients by virtue of positive definiteness. Let $C_{\mathcal{F}_m}(\cdot,\cdot)$ be the projection of $C(\cdot,\cdot)$ onto $\mathcal{F}_m$, and infimum of the errors in approximating $C(\cdot,\cdot)$ by $m$ dimensional functions $\in \mathcal{F}_m$, ie $E_m=\inf_{\mathcal{F}_m}\{\|C-C_{\mathcal{F}_m}\|^2, C_m \in \mathcal{F}_m \}$. Then,\\
(i) If $C(\cdot,\cdot)$ is piecewise Sobolev $(p,q)\,\, \forall (p,q) \,\, \exists $ for each $s>1$, a positive constant $c_s$ depending only on $C(\cdot,\cdot), s$, such that $E_m \, \leq \, c_s \sum_{j \geq m} j^{-s}$. Henceforth we call this bound $E_m^s$.\\ 
\newline
(ii) If $C(\cdot,\cdot)$ is piecewise analytic $\exists $ , positive constants $c_1, c_2$ depending only on $C(\cdot,\cdot)$, such that $E_m \leq c_s \sum_{j \geq m} c_1 \exp^{c_2 m^{1/d}}$. Henceforth we call this bound $E_m^a$.
\end{lemma}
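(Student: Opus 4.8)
The plan is to reduce the functional statement to the abstract approximation-theoretic results on truncated Mercer/Karhunen–Loève expansions already available in the uncertainty-quantification literature \citep{Todor06, Schwab06}, which control exactly the quantity $E_m = \inf_{\mathcal{F}_m}\{\|C - C_{\mathcal{F}_m}\|^2\}$ in terms of the smoothness of $C(\cdot,\cdot)$. First I would observe that, by Mercer's theorem, $C$ admits the expansion $C(x_1,x_2) = \sum_{i} \lambda_i \phi_i(x_1)\phi_i(x_2)$ with $\lambda_1 \geq \lambda_2 \geq \cdots \geq 0$, and that the finite rank-$m$ truncation $C_m(x_1,x_2) = \sum_{i=1}^m \lambda_i \phi_i(x_1)\phi_i(x_2)$ lies in the $m$-dimensional cone $\mathcal{F}_m$ spanned by $\{\phi_i^2\}_{i=1}^m$ (the positivity of the coefficients $\lambda_i$ is what makes this an element of the stated class of positive definite functions). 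So the truncation error $\|C - C_m\|^2 = \sum_{i > m}\lambda_i^2$ is an \emph{upper} bound for $E_m$, and it suffices to bound the tail $\sum_{i>m}\lambda_i^2$ — equivalently, to obtain decay rates for the Mercer eigenvalues $\lambda_i$ themselves.

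Next I would establish the eigenvalue decay according to the two smoothness regimes. For part (i), piecewise Sobolev $(p,q)$ smoothness for all $(p,q)$ means $C$ is, locally on each patch $D_{j_1}\times D_{j_2}$, in every Sobolev class; the standard estimate (see \citet{Todor06}) is that the eigenvalues of the integral operator with a $(p,q)$-Sobolev kernel on a $d$-dimensional domain decay like $\lambda_i = O(i^{-r})$ for an exponent $r$ growing with the Sobolev index, so that letting the index tend to infinity yields, for each $s>1$, a constant $c_s$ with $\lambda_i \leq c_s\, i^{-s}$ for all $i$; summing the squared tail and absorbing constants gives $E_m \leq c_s \sum_{j\geq m} j^{-s}$, which is the claimed bound $E_m^s$ (up to renaming $s$, since $\sum_{j \geq m} j^{-2s'}$ is of the same form). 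For part (ii), piecewise analyticity lets one extend the kernel to a complex neighborhood on each patch and apply Bernstein-type / potential-theoretic estimates for eigenvalues of analytic kernels, which decay exponentially in $i^{1/d}$: $\lambda_i \leq c_1 \exp(-c_2 i^{1/d})$ for suitable $c_1,c_2 > 0$ (again \citet{Todor06, Schwab06}). Squaring and summing over the tail $i \geq m$ produces the stated $E_m \leq c_s \sum_{j\geq m} c_1 \exp(c_2 m^{1/d})$-type bound $E_m^a$ (the exponent carrying the negative sign in the decay; the displayed form in the statement should be read with $C$ having fast-decaying, i.e. summable, spectrum).

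Two points deserve care. The main obstacle is the first one: making rigorous that the abstract Sobolev/analytic eigenvalue-decay theorems, which are typically stated for a \emph{globally} smooth kernel on a nice domain, transfer to the \emph{piecewise} smooth setting. Here I would use the finite partition $\{D_j\}_{j\in J}$ from the definitions: decompose the integral operator as a finite sum of operators supported on the blocks $D_{j_1}\times D_{j_2}$, apply the single-patch eigenvalue estimate on each block, and combine via Weyl-type inequalities for eigenvalues of sums of self-adjoint operators (the rank/eigenvalue of a finite sum is controlled by the sum of the individual eigenvalue sequences, shifted). Since $J$ is finite, this inflates constants by a factor depending only on $|J|$, and since the decay rate is what matters, the asserted bounds survive with $c_s$ (resp. $c_1,c_2$) depending only on $C$ and $s$. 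The second point — that $C_m \in \mathcal{F}_m$ in the precise sense of the lemma, so that $E_m$ really is bounded by the truncation error rather than only comparable to it — is handled by the explicit construction above: take $\psi_j = \phi_j$, note $C_m = \sum_{j=1}^m \lambda_j \psi_j^2$ with $\lambda_j > 0$, hence $C_m \in \mathcal{F}_m$ and $E_m \leq \|C - C_m\|^2 = \sum_{j>m}\lambda_j^2$, and finish by the eigenvalue bounds. This gives the lemma.
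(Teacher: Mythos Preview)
Your argument is essentially sound but follows a different path from the paper's. The paper does \emph{not} work directly with the bivariate kernel: it passes to the Karhunen--Lo\`eve picture, defining random functions $f(x)=\sum_j \eta_j\lambda_j\phi_j(x)$ and $f_{\mathcal{F}_m}(x)=\sum_j \zeta_j\theta_j\psi_j(x)$ with i.i.d.\ standard normal $\eta_j,\zeta_j$, interprets the norm as $\|f-f_{\mathcal{F}_m}\|:=E(f-f_{\mathcal{F}_m})^2$, and invokes Theorem~2.7 of \citet{Schwab06} to obtain the \emph{exact} identity $E_m=\sum_{j\geq m}\lambda_j$ (a linear tail, not your quadratic $\sum_{j>m}\lambda_j^2$); the decay bounds then follow from Corollary~3.3 and Proposition~3.5 of \citet{Todor06}. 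By contrast you stay deterministic, read $\|\cdot\|$ as the $L^2(D\times D)$ norm, pick the Mercer truncation $C_m$ as an explicit element of a particular $\mathcal{F}_m$, and bound $E_m\le\|C-C_m\|^2=\sum_{j>m}\lambda_j^2$. The practical upshot is the same once you rename exponents (as you note), but the paper's stochastic route identifies $E_m$ exactly rather than merely bounding it, and explains why the stated bounds carry the linear tail $\sum_{j\geq m}j^{-s}$ rather than $\sum_{j\geq m}j^{-2s}$. On the other hand, your treatment of the piecewise assumption---localizing to patches and gluing via Weyl-type inequalities---is more explicit than the paper, which simply defers that issue to the cited results in \citet{Todor06}.
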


\begin{proof} Let the Mercer expansion for $C(x,y)$ be $\sum_{j \geq 1} \lambda_j \phi_j(x) \phi_j(y) $. Define a random function, $f(x)=\sum_{j \geq 1} \eta_j \lambda_j \phi_j(x)$, where $\eta_j$ are independent standard normal random variables. Let $C_{\mathcal{F}_m}(x,y)=\sum_{j=1}^{m} \theta_j \psi_j(x) \psi_j(y)$ corresponding to the basis $\{\psi^2(\cdot)\}_{j=1}^{m}$ for $\mathcal{F}_m$. Similar to $f$, corresponding to $C_{\mathcal{F}_m}(\cdot,\cdot)$, define a random function, $f_{{\mathcal{F}_m}}(x)=\sum_{j \geq 1} \zeta_j \theta_j \psi_j(x)$, where $\zeta_j$'s are independent random normal variables. Defining $\|f-f_{\mathcal{F}_m}\| = E(f-f_{\mathcal{F}_m})^2$, and applying theorem $2.7$ in \citet{Schwab06}, we get that, $E_m=\sum_{j \geq m} \lambda_j$. Then the result follows by applications of Corollary $3.3$ and Proposition $3.5$ in \citet{Todor06}. \end{proof}

The above lemma quantifies the finite truncation accuracy of the Mercer theorem for smooth kernels. The bounds obtained are optimal and in general cannot be improved. The spaces ${\mathcal{F}_m}$'s can be made more general to encompass all square integrable bivariate functions, but the proof becomes more involved in that case and we omit it for the sake of brevity. In case of the squared exponential kernel, which is smooth of all orders, the finite truncation is even sharper and we give the following corollary for it:

\begin{corollary}
\label{squared_exp}
Let $C(x,y) = \theta_1 \exp^{-\theta_2\|x-y\|^2}$, for $x,y \in D$, compact $\subset \mathbb{R}^d$ and $\theta_1,\theta_2>0$. Using all notations as in lemma \ref{mercer_truncation}, $\exists$ positive constant $c_{\theta_1,\theta_2}$ such that $E_m \leq  c_{\theta_1,\theta_2} \sum_{j \geq m} \frac{\theta_2^{j^{1/d}}}{\Gamma(j^{1/d}/2)}$. Henceforth we shall call this bound $E_r^e$.
\end{corollary}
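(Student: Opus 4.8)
The plan is to use Lemma~\ref{mercer_truncation} to replace $E_m$ by the tail of the Mercer spectrum of the Gaussian kernel, and then to estimate that spectrum by exploiting two special features of the squared exponential: it is a tensor product of one-dimensional Gaussian factors, and each one-dimensional factor admits a rank-$\ell$ truncation whose error is governed by the Taylor tail of the exponential, i.e. by $1/\ell!$. Concretely, the proof of Lemma~\ref{mercer_truncation} already gives $E_m=\sum_{j\ge m}\lambda_j$ with $\{\lambda_j\}$ the decreasing Mercer eigenvalues of $C$, so it suffices to show $\lambda_j\le c_{\theta_1,\theta_2}\,\theta_2^{j^{1/d}}/\Gamma(j^{1/d}/2)$. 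Since $D$ is compact I would first enclose it in a cube $Q=[-a,a]^d$; extending functions by zero and using Weyl monotonicity of eigenvalues under the Loewner ordering of the associated integral operators gives $\lambda_j(D)\le\lambda_j(Q)$, so we may take $D=Q$.

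Next I would use $\exp(-\theta_2\|x-y\|^2)=\prod_{k=1}^d\exp(-\theta_2(x_k-y_k)^2)$, so that the integral operator on $L^2(Q)$ is the $d$-fold tensor power of the one-dimensional operator $T$ on $L^2([-a,a])$ with kernel $\exp(-\theta_2(u-v)^2)$; hence its eigenvalues are exactly the products $\mu_{i_1}\cdots\mu_{i_d}$ over $(i_1,\dots,i_d)\in\mathbb{N}^d$, where $\mu_1\ge\mu_2\ge\cdots$ are the eigenvalues of $T$. For the one-dimensional factor I would write $\exp(-\theta_2(u-v)^2)=e^{-\theta_2u^2}\,e^{2\theta_2uv}\,e^{-\theta_2v^2}$ and expand $e^{2\theta_2uv}=\sum_{k\ge0}\frac{(2\theta_2)^k}{k!}u^kv^k$; truncating at $k<\ell$ produces a kernel of rank $\le\ell$ whose uniform error on $[-a,a]^2$ is at most $\sum_{k\ge\ell}\frac{(2\theta_2a^2)^k}{k!}\le 2\,\frac{(2\theta_2a^2)^\ell}{\ell!}$ for $\ell$ large. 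By the variational characterization $\mu_{\ell+1}=\min\{\|T-S\|_2:\operatorname{rank}S\le\ell\}$ this yields $\mu_{\ell+1}\le c\,\beta^{\ell}/\Gamma(\ell+1)$ with $\beta=2\theta_2a^2$ and $c=c(\theta_1,a)$.

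Finally I would carry out the multi-index bookkeeping. Since $\log\Gamma$ is convex, $\prod_{k=1}^d\Gamma(i_k+1)$ is minimized, at fixed level $L=\sum_k i_k$, by the balanced multi-index, so $\mu_{i_1}\cdots\mu_{i_d}\le c^d\,\beta^{L}/\Gamma(L/d+1)^d$; and there are $\binom{L-1}{d-1}=O(L^{d-1})$ multi-indices at level $L$, hence $O(L^d)$ up to level $L$. Consequently the $j$-th largest product eigenvalue sits at level $L\asymp j^{1/d}$, and plugging in gives $\lambda_j\le c^d\,\beta^{L}/\Gamma(L/d+1)^d$ with $L\asymp j^{1/d}$. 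Because $\Gamma(L/d+1)^d$ grows like $\exp(cL\log L)$ it dominates the geometric factor $(2a^2)^L$, so after relabelling constants ($c_{\theta_1,\theta_2}$ now also absorbing $a$ and $d$) and using monotonicity of $\Gamma$ this is $\le c_{\theta_1,\theta_2}\,\theta_2^{j^{1/d}}/\Gamma(j^{1/d}/2)$; summing over $j\ge m$ gives the claim, and the super-exponential decay of $1/\Gamma$ makes the tail comparable to its leading term. I expect the multi-index counting to be the main obstacle: one has to argue carefully that the $j$ largest products really are exhausted by multi-indices of total level $O(j^{1/d})$ (the products are only approximately monotone in the level), and to track the interplay between the $\Gamma(L/d+1)^d$ denominator and the target $\Gamma(j^{1/d}/2)$ without losing the exponent. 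An alternative that bypasses the explicit Taylor/rank computation is to invoke the refined eigenvalue estimates for analytic kernels in \citet{Todor06} with the entire-function growth rate of the Gaussian inserted, which delivers the same bound.
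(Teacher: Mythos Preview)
Your proposal is correct in outline, and indeed you close by mentioning exactly the route the paper takes: the paper's proof is a one-line citation, ``exactly similar to the proof of Lemma~\ref{mercer_truncation} and an application of Proposition~3.6 in \citet{Todor06}''. That proposition already packages the super-geometric eigenvalue decay for the Gaussian kernel on a compact domain, so once $E_m=\sum_{j\ge m}\lambda_j$ is in hand (from the proof of Lemma~\ref{mercer_truncation}) the corollary follows immediately.

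Your main line of argument is a genuinely different, more hands-on derivation: reduce to a cube by Weyl monotonicity, exploit the tensor-product factorization of the Gaussian, bound the one-dimensional spectrum via the Taylor/rank trick $e^{-\theta_2(u-v)^2}=e^{-\theta_2u^2}e^{2\theta_2uv}e^{-\theta_2v^2}$, and then do the multi-index combinatorics. This buys self-containment and makes transparent \emph{why} the $j^{1/d}$ exponent and the reciprocal Gamma appear (level sets of size $\asymp L^{d-1}$ and the factorial from the Taylor tail, respectively). The cost is the bookkeeping you flag yourself: showing that the $j$ largest product eigenvalues are exhausted by multi-indices of total level $O(j^{1/d})$, and matching $\Gamma(L/d+1)^d$ against $\Gamma(j^{1/d}/2)$ after dividing out the geometric factor $(2a^2)^L$. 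Both are doable via Stirling and a counting argument, but neither is entirely trivial; the paper sidesteps all of this by delegating to \citet{Todor06}. In short: your argument is sound and more illuminating, the paper's is shorter and leans on the existing literature---and your final sentence already anticipates that alternative.
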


The proof is exactly similar to the proof of Lemma \ref{mercer_truncation} and an application of Proposition $3.6$ in \citet{Todor06}.

We now describe some results relating to the matrix eigenvalues, using the strength of Lemma \ref{mercer_truncation} above. 
%
\begin{theorem}
\label{decay_theorem}
 Let $K$ be the $n \times n$ positive definite matrix with $k(i,j)=C(x_i,x_j)$, with $x_i,x_j\in\{x_1,\ldots,x_n\} \subset D \subset \mathbb{R}^d$, compact. Also let $K=UDU^T$ be the spectral decomposition of $K$. Then,\\
(i) If $C(\cdot,\cdot)$ is piecewise Sobolev smooth $\forall (p,q)$, then $\exists$ for each $s>1$, a positive constant $c_s$ depending only on $C(\cdot,\cdot), s$, such that $d(m,m) \, \leq \, n\, c^{-s} m^{-s}$.\\
(ii) If $C(\cdot,\cdot)$ is piecewise analytic, then $\exists$ positive constants $c_1, c_2$ depending only on $C(\cdot,\cdot)$, such that $d(m,m) \leq n\, c_1 \exp^{c_2 m^{1/d}}$\\
(iii) In particular, for a squared exponential kernel, $C(x,y) = \theta_1 \exp^{-\theta_2\|x-y\|^2} \,\, \exists$ positive constant $c_{\theta_1,\theta_2}$ such that $d(m,m) \leq  n\, c_{\theta_1,\theta_2} \frac{\theta_2^{m^{1/d}}}{\Gamma(m^{1/d}/2)}$.
\end{theorem}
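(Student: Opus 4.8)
The plan is to transfer the operator-level truncation bounds of Lemma~\ref{mercer_truncation} and Corollary~\ref{squared_exp} to the matrix eigenvalues $d(m,m)$ through the quadrature correspondence set up around \eqref{integral_approx}, under which $d(i,i)/n\to\lambda_i$ as $n\to\infty$, where $\{\lambda_i\}$ are the eigenvalues of the integral operator with kernel $C(\cdot,\cdot)$. Granting that limit (cf.\ \citet{Baker77}), it suffices to bound $\lambda_m$ by the quantities $E_m^s$, $E_m^a$, $E_m^e$ and multiply through by $n$.

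First I would record the elementary chain $\lambda_m\le\sum_{j\ge m}\lambda_j=E_m$, the last equality being exactly what is established inside the proof of Lemma~\ref{mercer_truncation}. Then, according to the smoothness hypothesis, $E_m\le E_m^s$, $E_m\le E_m^a$, or $E_m\le E_m^e$ by parts (i)--(ii) of Lemma~\ref{mercer_truncation} and by Corollary~\ref{squared_exp}; combined with $d(m,m)\le n\lambda_m$ this already gives $d(m,m)\le n E_m^{\bullet}$ in each case. To sharpen the tail sums $\sum_{j\ge m}j^{-s}$, $\sum_{j\ge m}c_1\exp^{c_2 j^{1/d}}$, etc., into the single-term expressions appearing in the statement, I would use monotonicity of the eigenvalue sequence: since $\lambda_1\ge\lambda_2\ge\cdots\ge 0$, one has $\lambda_m\le\frac{2}{m}\sum_{j\ge\lceil m/2\rceil}\lambda_j=\frac{2}{m}E_{\lceil m/2\rceil}$, so the tail of a polynomially or exponentially decaying summand is of the same order as its leading term, up to an extra polynomial factor absorbed into the constant. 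This yields $d(m,m)\le n\,c_s m^{-s}$ in the Sobolev case, $d(m,m)\le n\,c_1\exp^{c_2 m^{1/d}}$ in the analytic case, and $d(m,m)\le n\,c_{\theta_1,\theta_2}\,\theta_2^{m^{1/d}}/\Gamma(m^{1/d}/2)$ for the squared exponential, matching (i)--(iii).

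The main obstacle is making the matrix-to-operator passage quantitative: the convergence $d(i,i)/n\to\lambda_i$ is pointwise in $i$ and asymptotic in $n$, whereas the stated bound is meant to hold at a fixed finite $n$ uniformly over $m$. A cleaner non-asymptotic substitute is an Eckart--Young sandwich --- letting $\widetilde K_{m-1}$ be the matrix obtained by sampling the best $(m-1)$-dimensional approximant $C_{\mathcal F_{m-1}}$ of Lemma~\ref{mercer_truncation} at $x_1,\dots,x_n$, one has $\mathrm{rank}(\widetilde K_{m-1})\le m-1$, so $\sum_{j\ge m}d(j,j)^2\le\|K-\widetilde K_{m-1}\|_F^2=\sum_{i,j}\bigl(C(x_i,x_j)-C_{\mathcal F_{m-1}}(x_i,x_j)\bigr)^2$, an empirical average that the quadrature assumption identifies with $n^2$ times a truncation error. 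The remaining delicacy is the bookkeeping of norms: this route controls a Frobenius ($\ell_2$) tail, so reconciling it with the trace-type quantity $E_m=\sum_{j\ge m}\lambda_j$ produced by Lemma~\ref{mercer_truncation} --- either by comparing nuclear norms, or by bounding $d(m,m)\le\sum_{j\ge m}d(j,j)$ and controlling the full tail --- is the step that needs genuine care. Everything after the matrix-to-operator reduction is routine summation of convergent series and reshuffling of the constants $c_s$, $c_1$, $c_2$, $c_{\theta_1,\theta_2}$.
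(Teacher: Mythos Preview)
Your overall strategy --- transfer operator eigenvalue decay to matrix eigenvalues, then invoke the smoothness-based bounds from \citet{Todor06} via Lemma~\ref{mercer_truncation} --- is exactly the paper's. The difference lies in how the matrix-to-operator passage is executed, and here the paper has a cleaner device than either of your two proposals.

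You correctly flag that the convergence $d(i,i)/n\to\lambda_i$ is merely asymptotic and does not by itself give a finite-$n$ inequality. The paper sidesteps this entirely: rather than using the convergence statement, it observes that the matrix eigenproblem for $K$ coincides with the Rayleigh--Ritz method (with the identity Gram matrix) applied to the integral operator, and then invokes Theorem~3.31 of \citet{Baker77}, which yields the \emph{non-asymptotic} inequality $d(m,m)\le n\lambda_m$ directly from the variational characterization of eigenvalues. This one citation dissolves your ``main obstacle'' without any Eckart--Young sandwich or norm-juggling; your alternative route through $\|K-\widetilde K_{m-1}\|_F$ is not needed, and the Frobenius-versus-trace mismatch you worry about never arises.

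A second, smaller difference: once $d(m,m)\le n\lambda_m$ is in hand, the paper does not pass through the tail sums $E_m=\sum_{j\ge m}\lambda_j$ at all. It applies Corollary~3.3 and Propositions~3.5--3.6 of \citet{Todor06} directly to bound the individual eigenvalue $\lambda_m$, which already produces the single-term expressions in (i)--(iii). Your monotonicity trick $\lambda_m\le\frac{2}{m}E_{\lceil m/2\rceil}$ works, but is an unnecessary detour given that the Todor results furnish pointwise bounds on $\lambda_m$ in the first place.
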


\begin{proof}
We begin with the discrete approximate solution of the integral equation, using the Galerkin technique, described in equation \eqref{integral_approx}. Note that this method coincides with the Raleigh-Ritz method with the identity matrix in \citet{Baker77}, since $C(\cdot,\cdot)$ is symmetric positive definite. Applying then theorem $3.31$ on page $322$ of \citet{Baker77}, we have $d(m,m) \leq n \lambda_m$, where $\lambda_m$ is the $m^{th}$ eigenvalue from the Mercer expansion of $C(\cdot,\cdot)$. The result then follows by a straight-forward application of Lemma \ref{mercer_truncation} above and Corollary $3.3$, Proposition $3.5$ in \citet{Todor06}. For the squared exponential kernel, it is straightforward application of Corollary \ref{squared_exp} above and Proposition $3.6$ in \citet{Todor06}.
\end{proof}

This theorem shows that for most covariance functions, the positive definite matrices that are generated by their finite realizations have eigenvalues that decay extremely fast, which was our assertion at the start of the section. In the simulated experiments we compute eigenvalues of some such covariance kernels and show that empirical results support our theory.
%
\subsection{Approximation accuracy and condition numbers}

We now present some results regarding the accuracy of our approximation algorithms. We shall be concerned with the column space approximation error, when we use $
\hat{K} = \Omega$ instead of $K$, measured as $\|K- Q Q^T K\|_{\psi}$, where $\Omega$ is a structured random matrix as in definition \ref{srmatrix} and $Q$ is orthonormal basis for the columns of $\hat{K}$. Such a $Q$ can be obtained from the $QR$ decomposition of $\hat{K}$.

\begin{theorem}
\label{accuracy_bound}
Let $K$ be the $n \times n$ positive definite matrix with $k(i,j)=C(x_i,x_j)$, with $x_i,x_j\in\{x_1,\ldots,x_n\} \subset D \subset \mathbb{R}^d$, compact. Let $\Omega$ be an $n \times r$ structured random matrix as formulated in definition \ref{srmatrix} and $Q$ be the left factor from the $QR$ decomposition of $K\Omega$. Choose $r,k$ such that $4[\sqrt{k}+\sqrt{8\ln{kn}}]^2\ln{K} \leq r \leq n$. With probability at least $(1-O(1/k))$, the following hold true, \\
(i) If $C(\cdot,\cdot)$ is piecewise Sobolev smooth $\forall (p,q)$, then (a) $\|K-QQ^T K\|_2 \leq (1+\sqrt{7n/r}) c_s r^{-s}$, where $c_s$ is a positive constant depending on $s$ for any $s>1$; (b) $\|K-QQ^T K\|_F \leq (1+\sqrt{7n/r}) E_r^s$. \\
(ii) If $C(\cdot,\cdot)$ is piecewise analytic, then (a) $\|K-QQ^T K\|_2 \leq n(1+\sqrt{7n/r}) c_1 \exp^{c_2 m^{1/d}}$, where $c_1,c_2$ are positive constants ; (b) $\|K-QQ^T K\|_F \leq n(1+\sqrt{7n/r}) E_r^a$. \\
(iii) In particular, for a squared exponential kernel, $C(x,y) = \theta_1 \exp^{-\theta_2\|x-y\|^2}$, then (a) $\|K-QQ^T K\|_2 \leq (1+\sqrt{7n/r}) c_{\theta_1,\theta_2} \frac{\theta_2^{m^{1/d}}}{\Gamma(m^{1/d}/2)}$, where $c_{\theta_1,\theta_2}$ is a positive constant depending on $\theta_1,\theta_2$; (b) $\|K-QQ^T K\|_F \leq (1+\sqrt{7n/r}) E_r^e$, \\
where $E_r^s, E_r^a, E_r^e$ are bounds as in Theorem \ref{mercer_truncation} and Corollary \ref{squared_exp}.
\end{theorem}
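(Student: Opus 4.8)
The plan is to reduce the statement to the now-standard error bounds for randomized low-rank approximation by subsampled randomized orthogonal transforms (the ``SRFT'' family), and then substitute in the eigenvalue-decay estimates of Theorem~\ref{decay_theorem} together with the Mercer-truncation bounds of Lemma~\ref{mercer_truncation} and Corollary~\ref{squared_exp}. The first step is to observe that the matrix $\Omega = cRTP$ of Definition~\ref{srmatrix} is exactly an SRFT-type test matrix: a random sign-flip $R$, a fast orthogonal mixing transform $T$, a column subsample $P$, and a scaling $c$ making the columns orthonormal. For $T$ the Walsh--Hadamard matrix this is the matrix analyzed in \citet{HM11} and sharpened in \citet{Tropp11}; for a general orthonormal $T$ (Fourier, Hartley, discrete cosine) the same conclusion holds once one controls the coherence amplification of $TR$, in the spirit of \citet{Avron10, Gittens12}. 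So one first verifies that our $\Omega$ meets the hypotheses of these results.

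Next, apply the SRFT error bound of \citet{HM11} (their Theorem~11.2) to $A = K$ with target rank $k$: if $r$ satisfies $4[\sqrt{k} + \sqrt{8\ln(kn)}]^2\ln k \le r \le n$, then with probability at least $1 - O(1/k)$,
\[
  \|K - QQ^{T}K\|_2 \le \sqrt{1 + 7n/r}\;\sigma_{k+1}(K),
  \qquad
  \|K - QQ^{T}K\|_F \le \sqrt{1 + 7n/r}\;\Big(\textstyle\sum_{j>k}\sigma_j(K)^2\Big)^{1/2}.
\]
Since $K$ is positive definite, $\sigma_j(K) = d(j,j)$, and since $\sqrt{1 + 7n/r} \le 1 + \sqrt{7n/r}$, the prefactor already matches the one in the statement.

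It then remains to insert the spectral decay. For part (i), Theorem~\ref{decay_theorem}(i) gives $\sigma_{k+1}(K) = d(k+1,k+1) \le n\,c^{-s}(k+1)^{-s}$; the admissibility condition forces $k$ and $r$ to agree up to a constant-times-logarithmic factor, so after replacing $k$ by $r$ and absorbing the $n$- and logarithmic factors into a new constant $c_s$ one obtains (i)(a). For (i)(b), bound the tail of squares by $\sum_{j>k} d(j,j)^2 \le d(k+1,k+1)\sum_{j>k} d(j,j)$ and identify $\sum_{j\ge k} d(j,j)$ with a constant multiple of the Mercer-truncation quantity $E_k$ of Lemma~\ref{mercer_truncation} (again trading $k$ for $r$), which gives the bound in terms of $E_r^s$. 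Parts (ii) and (iii) are the same computation with the analytic decay rate of Theorem~\ref{decay_theorem}(ii), the squared-exponential rate of (iii) and Corollary~\ref{squared_exp}, and the quantities $E_r^a$, $E_r^e$ in place of $E_r^s$.

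The one genuinely non-routine ingredient is the transfer to \emph{general} orthonormal transforms: the proof of the SRFT bound in \citet{HM11} is built around the Fourier/Hadamard structure, and what it really uses is a uniform bound on the entries (equivalently the leverage scores) of $TR$ --- the ``information-spreading'' property invoked in the introduction. Establishing this for the Hartley and discrete cosine transforms, and thereby justifying the claim that the bound holds for any of the listed transforms, is the crux; everything else is bookkeeping --- matching the target rank $k$ against the column count $r$, collapsing the $n$-dependent prefactors into constants, and passing between $\sum_{j}\sigma_j^2$ and $\sum_j \sigma_j$ for a fast-decaying spectrum.
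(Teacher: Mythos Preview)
Your proposal is correct and follows essentially the same approach as the paper: invoke the SRFT error bound (Theorem~11.2 of \citet{HM11}) to obtain the $(1+\sqrt{7n/r})$ prefactor times the tail singular values, and then substitute the eigenvalue decay rates from Theorem~\ref{decay_theorem} (and the Mercer-truncation quantities from Lemma~\ref{mercer_truncation} and Corollary~\ref{squared_exp}) for the tail. The paper's own proof is in fact terser than yours---it briefly recasts $QQ^TK$ as the projection via the Moore--Penrose inverse, records the Eckart--Young identities for $\|K-K_r\|_2$ and $\|K-K_r\|_F$, and then simply cites Theorem~\ref{decay_theorem} together with Theorem~11.2 of \citet{HM11}; your discussion of the $k$--versus--$r$ bookkeeping and of the extension to general orthonormal mixing transforms actually fills in points the paper leaves implicit.
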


\begin{proof} First note that, 
\begin{align*}
P_{\hat{K}} = ((\hat{K})^T \hat{K})^{+} (\hat{K})^T,
\end{align*}
where $((\hat{K})^T \hat{K})^{+}$ denotes the Moore-Penrose inverse of $((\hat{K})^T \hat{K})$. Let $Q_k R_k$ denote the $QR$ decomposition of $\hat{K}$, so that we have,
\begin{align*}
P_{\hat{K}} K &= (R_k^T Q_k^T Q_k R_k)^{-1} Q_k R_k K\\
             &=  R_k^{-1} Q_k^T K.          
\end{align*}
Let $K_r$ be the best rank $r$ approximation to $K$ and let the spectral decomposition of $K=UDU^T$. Then from the Eckart-Young theorem, $\|K-K_m\|_2 = d(r,r)$ and 
$\|K-K_m\|_2 = \sum_{j=r}^n d(j,j)$. The result then follows by an application of our Theorem \ref{decay_theorem} and Theorem $11.2$ in \citet{HM11}.
\end{proof}

In addition to providing accurate column space approximations, these orthogonal transforms spread the information of the matrix $K$ and improve its conditioning, while preserving its geometry. By preserving the geometry, we mean preserving the norms of the eigenvectors - we explore more of this aspect empirically in the simulations. Any low rank approximation improves the conditioning, but it has been shown \citep{Gittens12,Avron10} that projections using the orthogonal transforms as above, improve conditioning substantially beyond what is achieved by just using any low rank approximation with high probability in special cases. \citet{HM11,Gittens12} obtain bounds on the eigenvalues of $A\Omega$ where $\Omega$ is a structured random matrix and $A$ is orthogonal. We are in interested in the stability of the numerical system, $\hat{K} R_k^{-1}$, as explained in the introduction of this article, where the $QR$ decomposition of $\hat{K}=Q_k R_k$. We present the following result, without proof, trivially modifying results from \citet{Gittens12, Avron10},

\begin{theorem}
Fix $0< \epsilon < (1/3)$ and choose $0<\delta<1, r$ such that $6 \epsilon^2 [\sqrt(r) + \sqrt{8 \ln{n/\delta}}]^2 \ln{2n/\delta} \leq r \leq n$. Then with probability at least $1 - 2\delta$, we have the condition number of the linear system of interest, $c(K R_k^{-1}) \leq \sqrt{\frac{1+ \epsilon}{ 1- \epsilon}}$ .
\end{theorem}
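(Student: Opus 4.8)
The plan is to derive the condition-number bound from a single probabilistic fact --- that the structured transform is an $\epsilon$-subspace embedding of the relevant $r$-dimensional subspace --- after which everything is deterministic linear algebra. Writing $\Omega = cRTP$, the adjoint map $\Omega^{T} = cP^{T}T^{T}R$ sends $\mathbb{R}^{n}$ to $\mathbb{R}^{r}$ by applying the random sign flips $R$, the orthonormal ``FFT-like'' transform $T$, and then subsampling $r$ coordinates via $P$; this is precisely the subsampled randomized orthogonal transform analysed in \citet{Gittens12, Avron10}. The lemma to import --- the ``trivial modification'' alluded to above --- states that once $6\epsilon^{2}[\sqrt{r}+\sqrt{8\ln(n/\delta)}]^{2}\ln(2n/\delta)\le r\le n$, with probability at least $1-2\delta$ one has $(1-\epsilon)\|x\|^{2}\le\|\Omega^{T}x\|^{2}\le(1+\epsilon)\|x\|^{2}$ for every $x$ in the subspace of interest (the column space of $\hat{K}=K\Omega$, which by Theorem \ref{decay_theorem} essentially coincides with $\mathrm{span}(U_{r})$). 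The two copies of $\delta$ are spent on (i) a coherence-flattening bound that makes the random sign flips essential and that contributes the $[\sqrt{r}+\sqrt{8\ln(n/\delta)}]^{2}$ shape through a union bound over the $n$ rotated rows, and (ii) a matrix Chernoff bound for the uniform subsampling, which contributes the $\ln(2n/\delta)$ factor and the constant $6$. This step uses only $\max_{ij}|T_{ij}|=O(1/\sqrt{n})$, so it holds uniformly for the Hadamard, Fourier, Hartley and cosine choices of $T$.

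Given the embedding, let $G$ be the Gram matrix of the system being factorized and $\hat{K}=Q_{k}R_{k}$ the QR decomposition, so that $R_{k}^{T}R_{k}$ is the \emph{sketched} Gram matrix. The embedding inequality is exactly the operator inequality $(1-\epsilon)\,G\preceq R_{k}^{T}R_{k}\preceq(1+\epsilon)\,G$. Substituting $y=R_{k}x$ and rearranging, every nonzero $y$ obeys $(1+\epsilon)^{-1}\|y\|^{2}\le y^{T}R_{k}^{-T}G\,R_{k}^{-1}y\le(1-\epsilon)^{-1}\|y\|^{2}$, i.e.\ all singular values of the preconditioned operator $KR_{k}^{-1}$ lie in $[(1+\epsilon)^{-1/2},(1-\epsilon)^{-1/2}]$. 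The ratio of the largest to the smallest gives $c(KR_{k}^{-1})\le\sqrt{(1+\epsilon)/(1-\epsilon)}$ on the same event of probability $\ge 1-2\delta$, which is the claim. The restriction $\epsilon<1/3$ enters only to keep the parameter window $6\epsilon^{2}[\cdots]\le r\le n$ nonempty at moderate $n$; nothing in the deterministic part needs it.

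The sole genuine obstacle lives in the imported lemma: proving the subspace-embedding guarantee for a \emph{general} orthonormal $T$ with the stated constants. Two ingredients are needed. The first is control of the post-rotation leverage scores, $\max_{i}\|P_{\mathcal{S}}TR e_{i}\|$, where $P_{\mathcal{S}}$ projects onto the target subspace $\mathcal{S}$; the random signs let one apply a Hoeffding/Bernstein tail bound to each $\|P_{\mathcal{S}}TR e_{i}\|$ and then union-bound over $i\le n$, which is exactly what produces the $[\sqrt{r}+\sqrt{8\ln(n/\delta)}]^{2}$ coherence estimate. The second is a matrix Chernoff inequality for a sum of $r$ i.i.d.\ rank-one subsampling terms whose spectral norms are bounded by that flattened coherence; this upgrades the coherence estimate to the two-sided spectral bound at the price of the $\ln(2n/\delta)$ factor. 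Everything after that is the short computation of the preceding paragraph and introduces no further randomness, so the proof amounts to transcribing the argument of \citet{Gittens12, Avron10} with their subspace dimension taken equal to the target rank $r$ and their failure probability split as $\delta+\delta$; we therefore omit the details.
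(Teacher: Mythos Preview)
The paper does not actually prove this theorem: it is stated explicitly ``without proof, trivially modifying results from \citet{Gittens12, Avron10}.'' Your proposal invokes exactly these two references and sketches precisely the standard SRHT argument they contain --- coherence flattening via the random signs and a union bound over the $n$ rows, followed by a matrix Chernoff inequality for the coordinate subsampling, yielding a subspace embedding that is then converted to a condition-number bound by the short Gram-matrix computation you give. So your approach is the same as the one the paper points to, only spelled out in more detail than the paper itself provides.

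One caution worth flagging: the object $KR_{k}^{-1}$ in the theorem statement is dimensionally problematic as written ($K$ is $n\times n$, $R_{k}$ is $r\times r$), and the surrounding text in the paper refers instead to ``the numerical system $\hat{K}R_{k}^{-1}$.'' Your proof implicitly works with the correct Blendenpik-style object --- the original system preconditioned by the $R$-factor of its sketch --- by introducing the Gram matrix $G$ of ``the system being factorized'' rather than committing to the literal $K$. That is the right move, but the identification of the embedded subspace with $\mathrm{span}(U_{r})$ via Theorem~\ref{decay_theorem} is heuristic: in the cited Avron--Gittens analysis the subspace being embedded is the column space of the unsketched operator, and the embedding guarantee does not depend on any eigenvalue-decay assumption. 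Since the paper gives no proof there is nothing to contradict, but you should be aware that this appeal to Theorem~\ref{decay_theorem} is decoration rather than a load-bearing step.
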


This improvement in conditioning number causes huge improvement in learning algorithms over other competing approaches as we demonstrate later.
%

\section{Illustrations}

\subsection{Simulation Examples}
We begin with empirical investigation into the eigenvalue decays of matrices realized from commonly used covariance kernels. We consider an equispaced grid of $100$ points in $[0,1]$, to generate the $100 \times 100$ covariance matrices with $100$ positive eigenvalues. We start with such a moderate size, $100$, to illustrate that even for this small size, eigenvalues decay extremely fast, as was hypothesized and theoretically proven earlier. 

This first one we use is the squared exponential kernel, 
\begin{align*}
C(x,y)=\theta_1 \exp^{-\theta_2\|x-y\|^2}
\end{align*}
The parameter $\theta_1$ is a scaling parameter and does not change the eigendirections. In our present simulation, we set $\theta_1=1$. The parameter $\theta_2$ controls the smoothness of the covariance, as it decays with the distance $\|x-y\|$. We consider a range of $\theta_2$ values, $0.05, 0.5, 1, 1.5,  2,  10$, respectively, considering negligible decay with distance and very fast decay with distance. The plot of the eigenvalues is shown in figure \ref{sqexp_decay}. The figure shows similar rates of decay across the range of values of the smoothness parameter $\theta_2$. Depending on the smoothness parameter, the value of the largest eigenvalue changes by several orders, but in all cases, by the $10th$ largest eigenvalue, the sequence has approximately converged to $0$, which indicates that for a low rank approximation, a choice of rank $10$ would be sufficient for our purposes. 

This second one we use is the Matern covariance kernel \citep{WR96}, 
\begin{align*}
C(x,y) = \theta_1\frac{1}{\Gamma(\nu)2^{\nu-1}}\Bigg(\sqrt{2\nu}\frac{\|x-y\|}{\theta_2}\Bigg)^\nu B_{\nu}\Bigg(\sqrt{2\nu}\frac{\|x-y\|}{\theta_2}\Bigg), 
\end{align*}
where $B_{\nu}$ is the modified Bessel function of the second kind. The parameter $\theta_1$ is a scaling parameter and does not change the eigendirections, analogous to $\theta_1$ of the squared exponential kernel. In our present simulation, we set $\theta_1=1, \theta_2=1$. The parameter $\nu$ controls the smoothness of the covariance, analogous to the parameter $\theta_2$ of the squared exponential kernel. As a special case, as $\nu \to \infty$, we obtain the squared exponential kernel.  In typical spatial applications, $\nu$ is considered to be within $[0.5,3]$, as the data are rarely informative about the smoothness levels beyond these. We consider a range of $\nu$ values, $0.5, 1, 1.5,  2, 2.5, 3$, respectively, considering negligible decay with distance and very fast decay with distance. The plot of the eigenvalues is shown in figure \ref{Matern_decay}. The figure shows similar rates of decay across the range of values of the smoothness parameter $\nu$, exactly as exhibited with the squared exponential kernel. Once again, a low rank approximation of rank $10$ would suffice in this case for approximating the $100 \times 100$ matrix.

Using the same set of simulations, we compute conditioning numbers, in turn for each of the covariance kernels. We compute the conditioning number of the the full $100 \times 100$ covariance matrix and its best rank $m$ approximations for $m=5,10,15,20,50$ respectively. Results for the squared exponential kernel are tabulated in table \ref{sqexp_cond} and for the Matern covariance kernel in table \ref{Matern_cond}. The full covariance matrix is extremely ill-conditioned in each case, the best rank $m$ approximations improve the conditioning, as is to be expected. Even with the best rank $m$ approximations, the conditioning numbers are still very large indicating very unstable linear systems. In the table we omit the exact digits for the very large numbers and just indicate their orders in terms of powers of $10$. In case of the squared exponential kernel, condition numbers decrease from left to right and from top to bottom. This pattern is in general not true for the condition numbers of the Matern kernel, whose condition numbers are several orders smaller than that for the squared exponential but still to large for it to be stable linear systems. In general for the Matern kernel, as $\nu$ becomes larger, the kernel becomes smoother, the condition numbers are larger and the eigenvalues decay faster.

We next move to the simulations to consider the effect of blocking and gain in efficiency from parallelization. For large sample sizes, we will get very poor and time consuming inverse estimates, as demonstrated by the large conditioning numbers from the simulations of the previous section. To circumvent this, we start therefore with a known spectral decomposition and apply our approximations, pretending that the decomposition was unknown. We consider the known spectral decomposition as $K=EDE^T$, where $E$ is an orthonormalized matrix with randomly generated iid Gaussian entries. $D$ is the diagonal matrix of eigenvalues in decreasing order or magnitude and for this simulation example we assume exponential decay with scale $=1$ and rate $0.01$. We generate this covariance matrix $K$ for sample sizes $n=1000,5000,10000,50000$ respectively and apply our blocked approximate inversion algorithm. For the random matrix $\Omega$ used for the projection, we use three different choices, a matrix with scaled iid standard Gaussian entries, a structured random Hartley transform and a structured random discrete cosine transform. We have a total $64$ cores at our disposal and to get a flavor of the gains possible by parallelization, we use $8,16,32,64$ cores in turn and measure the gain in efficiency as the ratio of time taken to the time for the algorithm without parallelization.

We report the gain in efficiency in figure \ref{eff_gain}. The figure reveals the efficiency gain by using the blocked algorithm. It shows that efficiency gain increases with number of cores in the larger sample sizes. Specifically when the sample size is $50,000$, it seems that more than $64$ cores could have further increased the efficiency. The gains reported are obviously lower than the theoretical maximum possible gains, but are still substantial and have potential of further increase with larger number of cores in huge sample sizes. 

\subsection{Real data examples}
We consider a real data examples with a very large data set, to demonstrate the large data handling power and efficiency gain via our approach. We consider subsets of the actual data available and variables such that the data fit into a Gaussian process framework.

The example we consider is a subset of the Sloan sky digital survey \citep{Sloan07}. In the survey, redshift and photometric measurements were made for a very large number of galaxies. The photometric measurements consist of $5$ band measurements and are available for most of the galaxies, while the redshift measurements are not available for a large number of galaxies due to spectroscopy constraints. The main interest is therefore in predicting the red-shifts, given the other 5 measurements. For this example we consider a subset comprising of $180,000$ galaxies, whose photometric measurements and red-shift measurements are known. We hold out $30,000$ galaxies and pretend that redshifts are unknown for these galaxies, while using the remaining $150,000$ to train our Gaussian process model. 

For the computations we use both a squared exponential kernel and a Matern kernel. The data are scaled and centered before calculations so that we are not concerned with inference regarding the $\theta_1$ parameters for the covariance kernel. For the Matern kernel, we use a discrete uniform prior for the parameter $\nu$ based on $100$ equispaced points on the grid $[0.5,3]$. For the squared exponential kernel, we use a $100$ trials, each having a  random selection $1,000$ points from the training set to set a band for $\theta_2$, to estimate range of covariance of the data. It appears that it is sufficient to consider $\theta_2$ in the range $[0.05,1]$. We therefore use a grid on $100$ uniformly spaced points in $[0.05,1]$ and analogous to the prior for $\nu$, place a discrete uniform prior on $\theta_2$ of the squared exponential kernel.

For data sets of this size it becomes extremely inefficient to use predictive processes or knot based approaches per se, without blocking or parallelization and knot selection is almost practically infeasible to attempt. We work with the modified predictive processes approach of \citet{FB09}, without knot selection as competitors to our approach. The modified predictive processes is conceptually equivalent to the fully independent training conditional approach \citep{CR05} or the Gaussian process approximation with pseudo inputs, with \citep{SG06}, with the same priors for the parameters as our approach, so that the only difference is in the way the covariance inversion is done. 

For each of the approaches, including ours, one issue is how to select rank of the approximate projections. We use an approximate estimate of the covariance decay to come up with the low rank projection to be used. To calculate this we select $1,000$ data points spread out across the training set, in the following approximate manner, first select the pair of points which are most distant from each other, then select the point which is almost equidistant from the pair selected, and so on, till we have a well spread out selection. Here distance is measured as the Euclidean distance between the $5$ variate photometric measurement vectors. Using these $1,000$ data points, we calculate the prior covariance matrix, using the Gaussian covariance kernel and the Matern kernel for each value of the smoothness parameter on its grid of values (Our simulation examples have revealed that the decay of eigenvalues depends on this smoothing parameter). We then estimate the rank such that Frobenius norm error in using the best low rank approximation from the Eckart Young theorem for the full covariance matrix would be no more than $0.001$ (bearing in mind that this is an estimate of the best possible error, actual error in different projections will be more). In the computations for our approach, we use a different random projection to the targeted rank for each grid value of the smoothing parameter, while for the knot based approaches, a different random selection of knots is used. 

For our approach, we use $3$ different choices of the projection matrix $\Omega$, a matrix with scaled iid standard Gaussian entries, a structured random matrix with the Hartley transform and a structure random matrix with the discrete cosine transform. We use blocking as in our main algorithm for each of the three types of projection matrices on all $64$ cores of the distributed computing structure. The knot based approach has no obvious method of blocking or for it to be used on a grid computing environment. The MCMC algorithm in each case is run for $10,000$ iterations, with the first $2,000$ discarded as burn-in. Usual diagnostics do not show evidence of non-convergence of the MCMC algorithms and we report the effective sample size for the smoothing parameter as obtained from the CODA package in R \cite{PC06}. Time taken by each method is computed and reported.  The time calculations are made by recording the exact CPU times for iterations, excluding the time for Frobenius norm accuracy and conditioning number calculations and also excluding the time taken to set up parallel Matlab workers via matlabpool. We measure predictive accuracy, of the predictions in the hold out set, based on the relative mean squared error.

We tabulate the results from the experiment in the table \ref{realdata}.  The performance of any of the projection based approaches is substantially better than the knot based PP in terms of any of the parameters reported, predictive accuracy, time taken or the effective sample size of the smoothing parameter. Time taken by the PP in case of this large dataset was several days, as compared to a few hours for the blocked projection approaches. The improvement in predictive accuracy can probably be attributed to much better conditioned stable linear systems being used inherently and hence improved inference. The discrete cosine transform and Hartley transform perform marginally better than the random Gaussian projection, however the structured random transforms show marked improvement in the time taken, possibly due to the faster matrix multiplies. There is little to choose between the discrete cosine transform or the discrete Hartley transform.

\section{Discussion}
In this article we present a new method of blocked approximate inversion for large positive definite matrices, using subsampled random orthogonal transforms, motivated by several recent developments in random linear algebra. The blocking strategy comes from development of parallel $QR$ algorithms for tall matrices and in our case tall matrices are obtained from large positive definite matrices as the first projection step of a low rank approximation. We have also presented a comprehensive set of theoretical results quantifying the decay of eigenvalues of covariance matrices, which in turns helps bound the errors from the low rank projections. The examples show marked improvement in terms of numerical stability, prediction accuracy and efficiency and should be routinely applicable in a wide variety of scenarios.

One interesting future direction is the investigation of several structured random projections taken together. One may break up a target rank into several parts, the first of which is spent on a random projection, and the remaining pieces are learnt through the direction of maximum accuracy. Both theoretical (finding projections in the directions of maximum accuracy, which may be the direction of maximum information descent) and empirical investigation of this issue is currently being investigated.

\section*{Tables and figures}
\begin{table}[ht]
\caption{Table of condition numbers for the squared exponential kernel for different values of the smoothing parameter $\theta_2$ versus truncation levels, for truncating to the best possible approximation according to the Eckart-Young theorem. The rows represent the levels of truncation and the columns, the values of the smoothing parameter.}
\label{sqexp_cond}
\begin{tabular}{|c|c|c|c|c|c|c|}
& $\theta_1=0.05$ & $\theta_2=0.5$ & $\theta_2=1$ & $\theta_2=1.5$ & $\theta_2=2$ & $\theta_2=10$\\
\hline
full, $m=100$ & $ O(10^{20})$ & $O(10^{19})$ & $O(10^{19})$ & $O(10^{19})$ & $O(10^{18})$ & $O(10^{18})$\\
$m=50$ & $O(10^{17})$ & $O(10^{17})$ & $O(10^{17})$ & $O(10^{17})$ & $O(10^{17})$ & $O(10^{16})$\\
$m=20$ & $O(10^{17})$ & $O(10^{17})$ & $O(10^{16})$ & $O(10^{16})$ & $O(10^{16})$ & $O(10^{15})$\\ 
$m=15$ & $O(10^{17})$ & $O(10^{16})$ & $O(10^{16})$ & $O(10^{16})$ & $O(10^{16})$ & $O(10^{10})$\\ 
$m=10$ & $O(10^{16})$ & $O(10^{15})$ & $O(10^{13})$ & $O(10^{11})$ & $O(10^{10})$ & $O(10^{5})$\\
$m=5$  & $O(10^{9})$ & $O(10^{6})$ & $O(10^{4})$ & $O(10^{4})$ & $O(10^{3})$ & $29.89$\\
\hline  
\end{tabular}
\end{table} 

\begin{table}[ht]
\caption{Table of condition numbers for the Matern kernel for different values of the smoothing parameter $\nu$ versus truncation levels, for truncating to the best possible approximation according to the Eckart-Young theorem. The rows represent the levels of truncation and the columns, the values of the smoothing parameter.}
\label{Matern_cond}
\begin{tabular}{|c|c|c|c|c|c|c|}
& $\nu=0.5$ & $\nu=1$ & $\nu=1.5$ & $\nu=2$ & $\nu=2.5$ & $\nu=3$\\
\hline
full, $m=100$ & & & & &\\
full, $m=100$ & $ O(10^{3})$ & $O(10^{5})$ & $O(10^{7})$ & $O(10^{9})$ & $O(10^{11})$ & $O(10^{13})$\\
$m=50$ & $O(10^{3})$ & $O(10^{5})$ & $O(10^{6})$ & $O(10^{8})$ & $O(10^{9})$ & $O(10^{11})$\\
$m=20$ & $O(10^{2})$ & $O(10^{4})$ & $O(10^{5})$ & $O(10^{6})$ & $O(10^{7})$ & $O(10^{9})$\\ 
$m=15$ & $O(10^{2})$ & $O(10^{3})$ & $O(10^{4})$ & $O(10^{5})$ & $O(10^{6})$ & $O(10^{7})$\\ 
$m=10$ & $O(10^{2})$ & $O(10^{3})$ & $O(10^{3})$ & $O(10^{4})$ & $O(10^{5})$ & $O(10^{5})$\\
$m=5$  & $38.18$ & $O(10^{2})$ & $O(10^{2})$ & $O(10^{2})$ & $O(10^{3})$ & $O(10^{3})$\\\hline  
\end{tabular}
\end{table}

\begin{table}[ht]
\caption{Results from the real data experiment. Columns are the type of experiment, PP corresponds to the modified predictive process approach, RP corresponds to the projection method with a Gaussian projection matrix, HP corresponds to a structured random projection with the Hartley transform, HC corresponds to a structured random projection with the discrete cosine transform. Time taken is measured as relative time, taking the time taken by HC to be 1. RMSE is relative mean squared error, ESS stands for effective sample size.}
\label{realdata}
\begin{tabular}{|c|c|c|c|c|}
& PP & RP & HP & HC\\
\hline
RMSE, Sq Exp & 70.63 & 19.87 & 14.64 & 15.72 \\
Relative time, Sq Exp & $O(10^3)$ & 10.79 & 1.57 & 1 \\
Avg Condition No, Sq Exp & $O(10^7)$ & $O(10^3)$ & $O(10^2)$ & $O(10^2)$  \\
Avg Frobenius norm error, Sq Exp & 0.55 & 0.04 & 0.07 & 0.05\\ 
ESS, Sq Exp & 237 &  833 & 1991 & 1875  \\
RMSE, Matern & 35.61 & 21.27 & 20.83 & 20.97 \\
Relative time, Matern & $O(10^4)$ & 32.30 & 0.91 & 1\\
Avg Condition No, Matern & $O(10^4)$ & $O(10^3)$ & $O(10^2)$ & $O(10^2)$ \\
Avg Frobenius norm error, Matern & 1.37 & 0.62 & 0.18 & 0.39 \\ 
ESS, Matern & 569 & 1322 & 1219 & 1794 \\
\hline
\end{tabular}
\end{table}

\begin{figure}[ht]
\begin{flushleft}
\includegraphics[width=\textwidth, height=0.8\textheight]{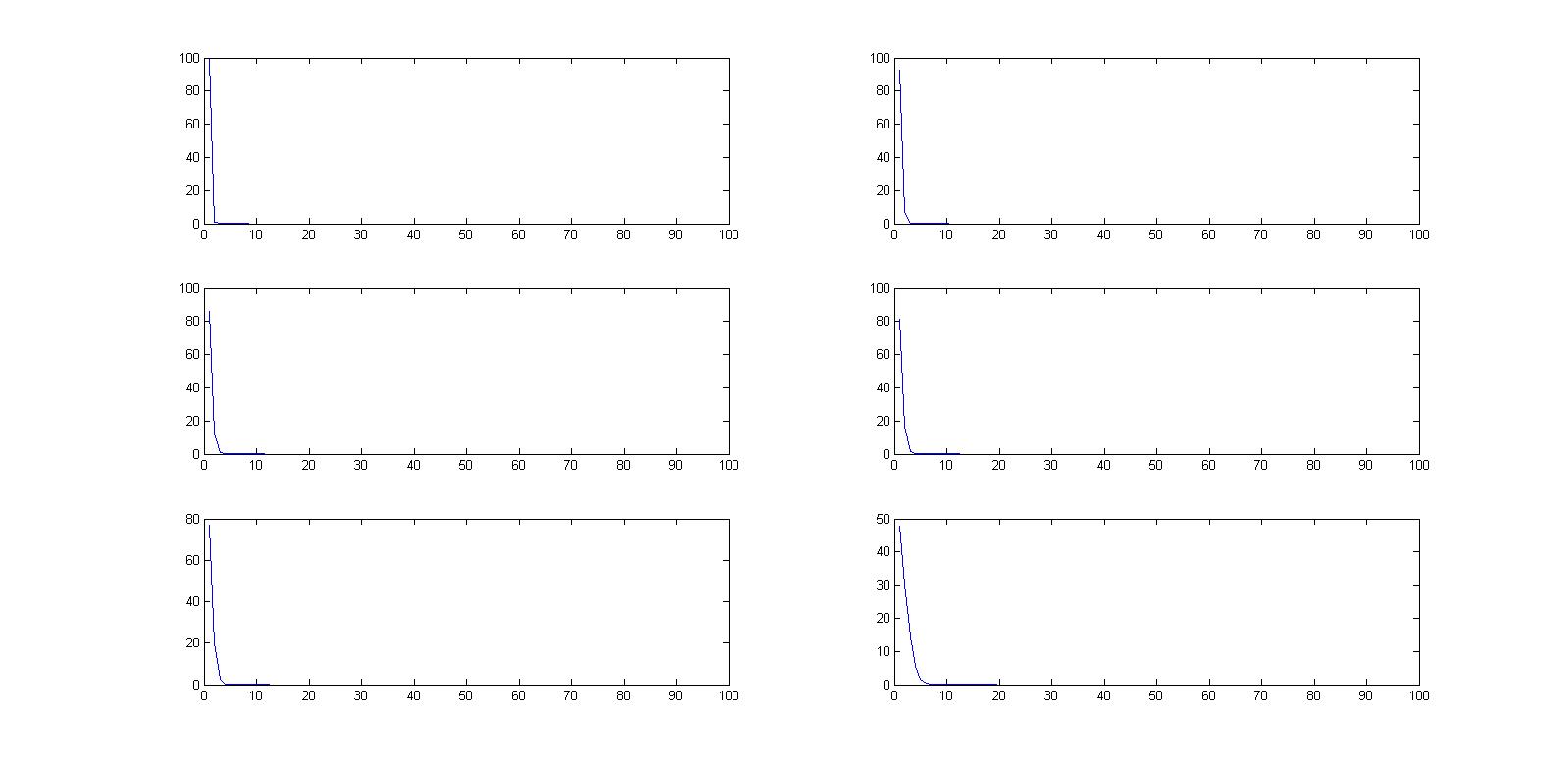}
\end{flushleft}
\caption{Decay of Eigenvalues: Panel representing decay of eigenvalues in the squared exponential covariance kernel. We plot the $100$ eigenvlaues in decreasing order of magnitude, the x-axes represent the indices, the y-axes the eigenvalues. Top left panel, top right, middle left, middle right, bottom left, bottom right are for values of the smoothness parameter $\theta_2 = \, 0.05, \, 0.5, \, 1, \, 1.5, \, 2, \& 10$ respectively.}
\label{sqexp_decay}
\end{figure}

\begin{figure}[ht]
\begin{flushleft}
\includegraphics[width=\textwidth, height=0.8\textheight]{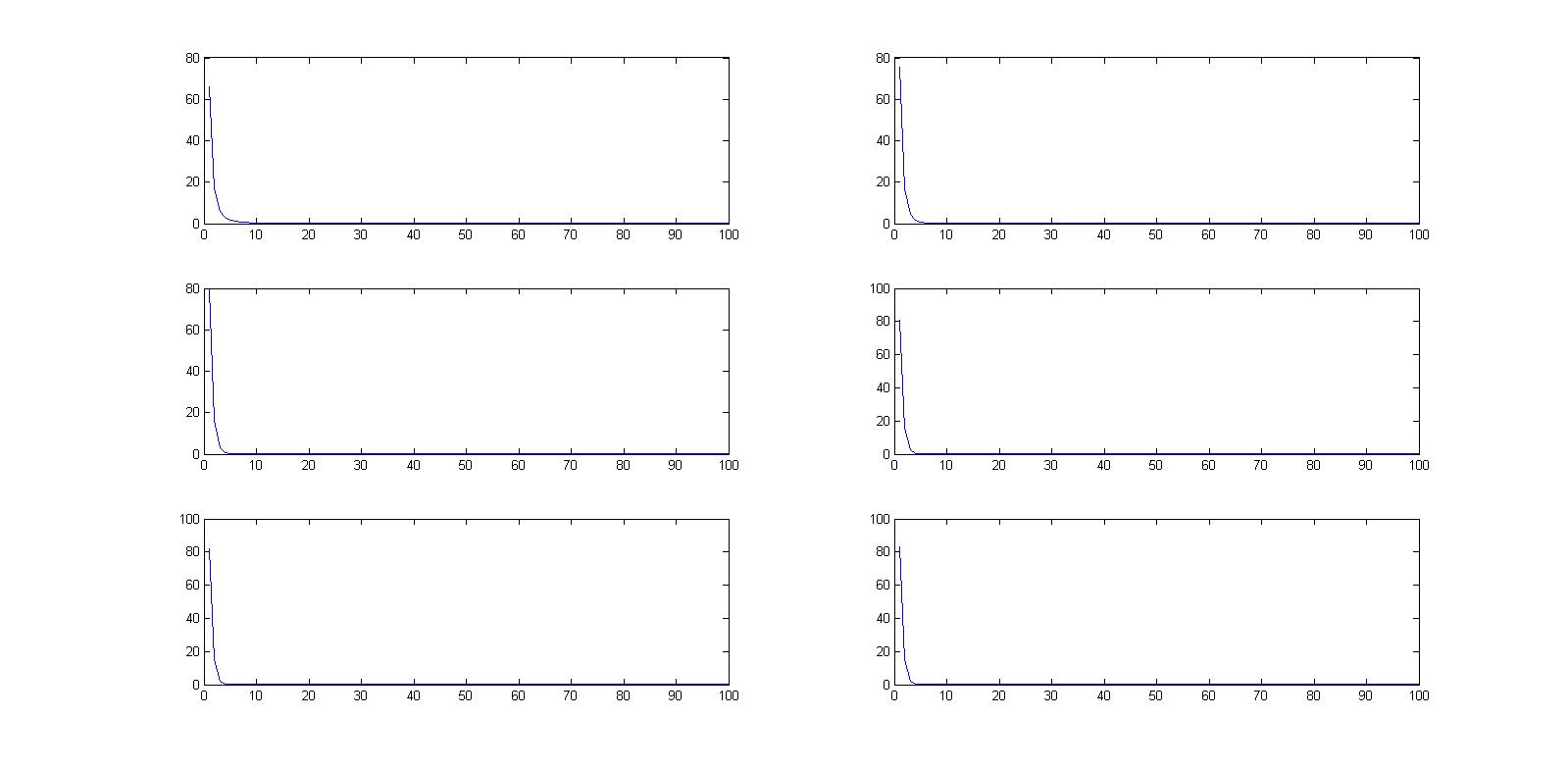}
\end{flushleft}
\caption{Decay of Eigenvalues: Panel representing decay of eigenvalues in the Matern covariance kernel. We plot the $100$ eigenvlaues in decreasing order of magnitude, the x-axes represent the indices, the y-axes the eigenvalues. Top left panel, top right, middle left, middle right, bottom left, bottom right are for values of the smoothness parameter $\nu = \, 0.5, \, 1, \, 1.5, \, 2, \, 2.5, \, \& 3$ respectively.}
\label{Matern_decay}
\end{figure}

\begin{figure}[ht]
\begin{flushleft}
\includegraphics[width=\textwidth, height=0.8\textheight]{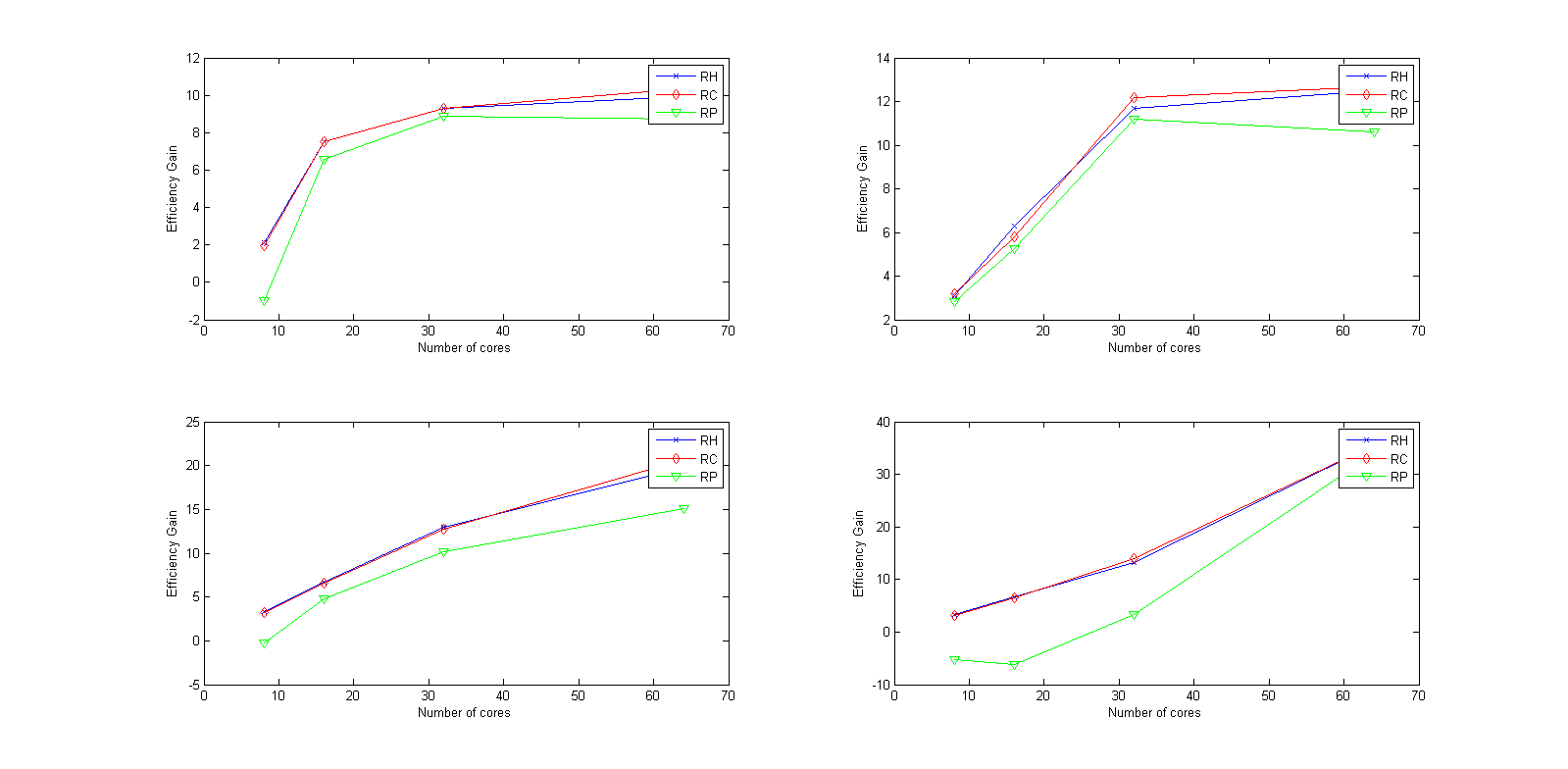}
\end{flushleft}
\caption{Gain in efficiency by using increasing number of cores in a parallel computing environment: Top left panel, top right, bottom left, bottom right are for sample sizes $n = \, 1000, \, 5000, \, 10000, \, 50000$ respectively. Superimposed on each panel are the gains by using a Hartley transform (RH), discrete cosine transform (RC) and a scaled random Gaussian projection (RP).}
\label{eff_gain}
\end{figure}

\bibliographystyle{jasa}
\bibliography{mybib}

\begin{thebibliography}{}
\newcommand{\enquote}[1]{``#1''}

\bibitem[Agullo et~al.(2010)Agullo, Coti, Dongarra, Herault, and
  Langem]{Agullo10}
Agullo, E., Coti, C., Dongarra, J., Herault, T., and Langem, J. (2010),
  \enquote{QR factorization of tall and skinny matrices in a grid computing
  environment,} in \emph{Parallel \& Distributed Processing (IPDPS), 2010 IEEE
  International Symposium on}, pp. 1--11, IEEE.

\bibitem[Avron et~al.(2010)Avron, Maymounkov, and Toledo]{Avron10}
Avron, H., Maymounkov, P., and Toledo, S. (2010), \enquote{Blendenpik:
  Supercharging LAPACK's least-squares solver,} \emph{SIAM Journal on
  Scientific Computing}, 32, 1217--1236.

\bibitem[Baker and Baker(1977)Baker and Baker]{Baker77}
Baker, C.~T. and Baker, C. (1977), \emph{The numerical treatment of integral
  equations}, vol.~13, Clarendon press Oxford.

\bibitem[Banerjee et~al.(2013)Banerjee, Dunson, and Tokdar]{Banerjee12}
Banerjee, A., Dunson, D.~B., and Tokdar, S.~T. (2013), \enquote{Efficient
  Gaussian process regression for large datasets,} \emph{Biometrika}, 100,
  75--89.

\bibitem[Blackford et~al.(1997)Blackford, Choi, Cleary, D'Azevedo, Demmel,
  Dhillon, Dongarra, Hammarling, Henry, Petitet, et~al.]{Blackford97}
Blackford, L.~S., Choi, J., Cleary, A., D'Azevedo, E., Demmel, J., Dhillon, I.,
  Dongarra, J., Hammarling, S., Henry, G., Petitet, A., et~al. (1997),
  \emph{ScaLAPACK user's guide}, vol.~3, Siam Philadelphia.

\bibitem[Boutsidis and Gittens(2012)Boutsidis and Gittens]{Gittens12}
Boutsidis, C. and Gittens, A. (2012), \enquote{Improved matrix algorithms via
  the subsampled randomozied Hadamard transform,} \emph{arxiv.org}.

\bibitem[Cand{\`e}s et~al.(2006)Cand{\`e}s, Romberg, and Tao]{CT06}
Cand{\`e}s, E.~J., Romberg, J., and Tao, T. (2006), \enquote{{Robust
  uncertainty principles: Exact signal reconstruction from highly incomplete
  frequency information},} \emph{IEEE Trans. Info. Theory}, 52, 489--509.

\bibitem[Choi et~al.(1994)Choi, Walker, and Dongarra]{Choi94}
Choi, J., Walker, D.~W., and Dongarra, J.~J. (1994), \enquote{PUMMA: Parallel
  universal matrix multiplication algorithms on distributed memory concurrent
  computers,} \emph{Concurrency: Practice and Experience}, 6, 543--570.

\bibitem[Constantine and Gleich(2011)Constantine and Gleich]{Constantine11}
Constantine, P.~G. and Gleich, D.~F. (2011), \enquote{Tall and skinny QR
  factorizations in MapReduce architectures,} in \emph{Proceedings of the
  second international workshop on MapReduce and its applications}, pp. 43--50,
  ACM.

\bibitem[Cox and Higham(1997)Cox and Higham]{Cox97QR}
Cox, A.~J. and Higham, N.~J. (1997), \enquote{Stability of Householder QR
  factorization for weighted least squares problems,} \emph{Numerical
  Analysis}, pp. 57--73.

\bibitem[Delves and Mohamed(1988)Delves and Mohamed]{Delves88}
Delves, L.~M. and Mohamed, J. (1988), \emph{Computational methods for integral
  equations}, Cambridge University Press.

\bibitem[Donoho(2006)Donoho]{D06}
Donoho, D.~L. (2006), \enquote{{Compressed sensing},} \emph{IEEE Trans. Info.
  Theory}, 52, 1289--1306.

\bibitem[Drineas et~al.(2011)Drineas, Mahoney, Muthukrishnan, and
  Sarl{\'o}s]{Drineas11}
Drineas, P., Mahoney, M.~W., Muthukrishnan, S., and Sarl{\'o}s, T. (2011),
  \enquote{Faster least squares approximation,} \emph{Numerische Mathematik},
  117, 219--249.

\bibitem[Finley et~al.(2009)Finley, Sang, Banerjee, and Gelfand]{FB09}
Finley, A.~O., Sang, H., Banerjee, S., and Gelfand, A.~E. (2009),
  \enquote{{Improving the performance of predictive process modeling for large
  datasets},} \emph{Comp. Statist. Data Anal.}, 53, 2873--2884.

\bibitem[Halko et~al.(2011)Halko, Martinsson, and Tropp]{HM11}
Halko, N., Martinsson, P.~G., and Tropp, J.~A. (2011), \enquote{Finding
  Structure with Randomness: Probabilistic Algorithms for Constructing
  Approximate Matrix Decompositions,} \emph{SIAM Rev.}, 53, 217--288.

\bibitem[Johnson and Lindenstrauss(1984)Johnson and Lindenstrauss]{Johnson84}
Johnson, W.~B. and Lindenstrauss, J. (1984), \enquote{Extensions of Lipschitz
  mappings into a Hilbert space,} \emph{Contemporary mathematics}, 26, 1.

\bibitem[K{\"u}hn(1987)K{\"u}hn]{Kuhn87}
K{\"u}hn, T. (1987), \enquote{Eigenvalues of integral operators generated by
  positive definite H{\"o}lder continuous kernels on metric compacta,} in
  \emph{Indagationes Mathematicae (Proceedings)}, vol.~90, pp. 51--61,
  Elsevier.

\bibitem[Mattson et~al.(2005)Mattson, Sanders, and Massingill]{Mattson05}
Mattson, T.~G., Sanders, B.~A., and Massingill, B. (2005), \emph{Patterns for
  parallel programming}, Addison-Wesley Professional.

\bibitem[Plummer et~al.(2006)Plummer, Best, Cowles, and Vines]{PC06}
Plummer, M., Best, N., Cowles, K., and Vines, K. (2006), \enquote{{CODA:
  Convergence diagnosis and output analysis for MCMC},} \emph{R News}, 6,
  7--11.

\bibitem[Press et~al.(2007)Press, Teukolsky, Vetterling, and Flannery]{Press07}
Press, W.~H., Teukolsky, S.~A., Vetterling, W.~T., and Flannery, B.~P. (2007),
  \emph{Numerical recipes 3rd edition: The art of scientific computing},
  Cambridge University Press.

\bibitem[Quinonero~Candela and Rasmussen(2005)Quinonero~Candela and
  Rasmussen]{CR05}
Quinonero~Candela, J. and Rasmussen, C.~E. (2005), \enquote{{A Unifying View of
  Sparse Approximate Gaussian Process Regression},} \emph{J. Mach. Learn.
  Res.}, 6, 1939--1959.

\bibitem[Schwab and Todor(2006)Schwab and Todor]{Schwab06}
Schwab, C. and Todor, R.~A. (2006), \enquote{Karhunen--Lo{\`e}ve approximation
  of random fields by generalized fast multipole methods,} \emph{Journal of
  Computational Physics}, 217, 100--122.

\bibitem[Snelson and Ghahramani(2006)Snelson and Ghahramani]{SG06}
Snelson, E. and Ghahramani, Z. (2006), \enquote{{Sparse Gaussian processes
  using pseudo-inputs},} \emph{Advances in Neural Information Processing
  Systems}, 18, 1257--1264.

\bibitem[Stewart(1993)Stewart]{S93}
Stewart, G.~W. (1993), \enquote{{On the early history of the singular value
  decomposition},} \emph{SIAM review}, 35, 551--566.

\bibitem[Stoughton et~al.(2007)Stoughton, Lupton, Bernardi, Blanton, Burles,
  Castander, Connolly, Eisenstein, Frieman, Hennessy, et~al.]{Sloan07}
Stoughton, C., Lupton, R.~H., Bernardi, M., Blanton, M.~R., Burles, S.,
  Castander, F.~J., Connolly, A.~J., Eisenstein, D.~J., Frieman, J.~A.,
  Hennessy, G.~S., et~al. (2007), \enquote{Sloan digital sky survey: early data
  release,} \emph{The Astronomical Journal}, 123, 485.

\bibitem[Todor(2006)Todor]{Todor06}
Todor, R.~A. (2006), \enquote{Robust eigenvalue computation for smoothing
  operators,} \emph{SIAM journal on numerical analysis}, 44, 865--878.

\bibitem[Trefethen and Bau~III(1997)Trefethen and Bau~III]{Trefethen97}
Trefethen, L.~N. and Bau~III, D. (1997), \emph{Numerical linear algebra},
  no.~50, Society for Industrial Mathematics.

\bibitem[Tropp(2011)Tropp]{Tropp11}
Tropp, J.~A. (2011), \enquote{Improved analysis of the subsampled randomized
  Hadamard transform,} \emph{Advances in Adaptive Data Analysis}, 3, 115--126.

\bibitem[Williams and Rasmussen(1996)Williams and Rasmussen]{WR96}
Williams, C. K.~I. and Rasmussen, C.~E. (1996), \enquote{{Gaussian Processes
  for Regression},} pp. 514--520.

\bibitem[Woolfe et~al.(2008)Woolfe, Liberty, Rokhlin, and Tygert]{Woolfe08}
Woolfe, F., Liberty, E., Rokhlin, V., and Tygert, M. (2008), \enquote{A fast
  randomized algorithm for the approximation of matrices,} \emph{Applied and
  Computational Harmonic Analysis}, 25, 335--366.

\end{thebibliography}

\end{document}